\documentclass[twocolumn,superscriptaddress,10pt]{revtex4-2}
\usepackage{multirow,amsthm,amssymb,amsbsy,amsmath,bm,float}
\usepackage{hyperref}
\usepackage{graphicx}
\usepackage{booktabs}
\usepackage{verbatim}
\usepackage{dcolumn}
\usepackage{color}
\usepackage{xcolor}
\usepackage[normalem]{ulem}
\usepackage{soul}
\usepackage{braket}
\usepackage{physics}

\theoremstyle{plain}
\newtheorem{theorem}{Theorem}
\newtheorem{lemma}{Lemma}
\newtheorem{corollary}{Corollary}

\theoremstyle{definition}

\begin{document}

\title{A hidden bottleneck in  classical and quantum linear reservoir computing}

\author{Johannes Nokkala}
\affiliation{
 Department of Physics and Astronomy, University of Turku, FI-20014, Turun Yliopisto, Finland
}
\affiliation{
 Turku Collegium for Science, Medicine and Technology, University of Turku, Turku, Finland.
}

\author{Federico Centrone}
\affiliation{
ICFO-Institut de Ciencies Fotoniques, The Barcelona Institute of Science and Technology,
Av. Carl Friedrich Gauss 3, 08860 Castelldefels (Barcelona), Spain
}
\affiliation{
Universidad de Buenos Aires, Facultad de Ciencias Exactas y Naturales,
Departamento de F{\'i}sica, Ciudad Universitaria, 1428 Buenos Aires, Argentina
}

\author{Francesco Arzani}
\affiliation{QAT team, DIENS, \'Ecole Normale Sup\'erieure, PSL University, CNRS, INRIA, 45 rue d'Ulm, Paris 75005, France
}

\begin{abstract}
We identify a hidden bottleneck in the information processing capacity of linear reservoir computers. When the measured features evolve linearly in the reservoir and the output is formed by linear readout with bias, we show that the capacity available at any fixed delay is limited by what is already present in the preprocessed input. Linear reservoir dynamics can therefore redistribute features, but cannot create new fixed-delay expressive power on their own. This limitation is hidden by global capacity measures, since contributions from different delays can accumulate even when each individual delay is strongly constrained. As an experimentally important realization of this general result, we derive the corresponding Gaussian limit for covariance-based continuous-variable quantum reservoirs. Numerical experiments show that experimentally accessible single-photon operations surpass this limit, establishing them as a genuine resource for quantum reservoir computing. The resulting excess capacity also provides an operational witness of non-Gaussian processing in black-box continuous-variable systems under minimal assumptions.
\end{abstract}

\maketitle
\section{Introduction}

Nonlinearity is a fundamental and ubiquitous aspect of the natural world, governing phenomena from fluid dynamics to the complex interactions within biological systems. For computation, and particularly for machine learning, this nonlinearity is not a nuisance but an essential ingredient for powerful information processing. By mapping inputs into higher-dimensional feature spaces, nonlinear transformations allow computational models to capture intricate patterns and correlations that are inaccessible to purely linear methods. This principle underpins the remarkable success of modern neural networks. However, the power of nonlinearity is a double-edged sword. Systems with strong nonlinear feedback are notoriously difficult to analyze, their dynamics can be numerically unstable, and proving formal properties about their behavior is often intractable. A prominent example is the difficulty of training recurrent neural networks, where feedback loops make gradient-based optimization prone to instability~\cite{pmlr-v28-pascanu13}.

Reservoir computing offers an elegant solution to this dilemma~\cite{10.1016/j.cosrev.2009.03.005}. Derived from recurrent neural network theory, it circumvents the challenges of nonlinear training by adopting a radical separation of concerns. 
 The core of the system, known as the  ``reservoir'', is a fixed, nonlinear dynamical system that is never trained. 
 Instead, it acts as a high-dimensional feature generator, activated by an input signal. The computational task is learned only at the final stage by a linear readout layer, which is simple and efficient to train. This paradigm shows that rich, recurrent dynamics are possible and powerful, provided the nonlinear part remains fixed and only the linear part is trained.

This approach is particularly well-suited for processing time-series data~\cite{10.1016/j.cosrev.2009.03.005,TANAKA2019100}, where its intrinsic memory and complex transient responses can be effectively exploited. 
 Furthermore, its appeal for experimental science is immense. Because the framework is largely agnostic about the specific physical realization of the reservoir, requiring only a suitable set of nonlinear dynamics, it has emerged as a versatile platform for computation in a vast range of physical systems. This flexibility is especially attractive for quantum platforms, where engineering controllable, nonlinear quantum dynamics is a central goal.

At the same time, the use of a linear readout raises a basic conceptual question: where does the computational power of a reservoir actually come from? More specifically, how much of the nonlinearity is produced intrinsically by the reservoir, and how much is already supplied by input encoding or classical post-processing? This question is particularly pressing in architectures where feature generation and memory are not provided by the same mechanism. In many practically relevant schemes, nonlinear transformations are already introduced at the input stage, while memory is carried by a separate dynamical or feedback layer; in quantum reservoir computing (QRC), nonlinear input encoding has been identified as a widespread feature of existing proposals, and in continuous-variable Gaussian reservoirs the encoding itself is a primary source of nonlinearity \cite{govia2022nonlinear,nokkala2021gaussian}. It is therefore natural to ask whether this architectural separation imposes a fundamental limitation on information processing capacity.

In this work we show that it does. We identify a hidden bottleneck in the information processing capacity (IPC) of linear reservoirs. For reservoir computers whose chosen observables admit a finite-dimensional linear update and whose output is formed by linear readout with bias, the capacity available at any fixed delay is bounded by the capacity already present in the preprocessed input. In other words, linear reservoir dynamics cannot create new delay-resolved expressive power on their own; they can only linearly reshape what preprocessing has already made available. This is a strong limitation, but it is hidden at the level of aggregate capacity because contributions from different delays can accumulate. A reservoir may therefore appear expressive in total IPC, or on benchmark tasks, even when the expressive power available at each individual delay is sharply constrained. Our result provides a crucial design principle for future reservoir computers, rigorously defining the boundary between what can be achieved with simple linear dynamics and what requires a genuine investment in engineering nonlinear physical interactions.

This general viewpoint is especially relevant for continuous-variable (CV) photonic quantum reservoir computing, where input encoding is a key source of nonlinearity and Gaussian dynamics often mediate the reservoir evolution \cite{nokkala2021gaussian,nokkala2021high,garcia2023scalable,garcia2023squeezing,nokkala2023online,nokkala2024retrieving,paparelle2025experimental}. CV photonic platforms based on Gaussian states, Gaussian transformations, and homodyne detection are among the most experimentally promising realizations of QRC. They are scalable, resilient to decoherence, compatible with room-temperature operation, and their use for QRC has recently been experimentally demonstrated \cite{paparelle2025experimental}. However, Gaussian dynamics act linearly on position and momentum variables, which makes them efficiently classically simulable \cite{bartlett2002efficient}, so  their computational power is expected to be structurally limited. Our general bottleneck theorem makes this intuition precise: Gaussian covariance dynamics provide an experimentally important realization of a broader linear-reservoir limitation.

For covariance-based CV quantum reservoirs, the general linear-reservoir bottleneck yields a concrete Gaussian bound. At any fixed delay, the IPC of a Gaussian reservoir is bounded by the IPC already contained in the encoded input covariance matrix. This immediately clarifies the role of non-Gaussian resources. If the observed delay-resolved capacity exceeds the Gaussian bound, then no Gaussian reservoir transformation compatible with the stated assumptions can explain the data. The resulting excess capacity is therefore an operational witness of non-Gaussian processing at the level of the effective reservoir map. This witness is extracted directly from input-output processing data and does not require full state tomography.

To demonstrate how the Gaussian bottleneck can be surpassed in practice, we introduce a minimal and experimentally realistic non-Gaussian CV QRC scheme. The input time series modulates squeezed-vacuum states, correlations are generated by a fixed Gaussian multimode transformation, and non-Gaussianity is injected through conditional single-photon operations, which have already been demonstrated experimentally in multimode settings \cite{roeland2022mode,wenger_non-gaussian_2004,zavatta_quantum--classical_2004,ourjoumtsev_generating_2006,averchenko_multimode_2016,ra_non-gaussian_2020}. The readout is formed from homodyne statistics, so the scheme remains close in spirit to existing Gaussian photonic reservoirs while introducing the minimal non-Gaussian ingredient needed to surpass the bound.

Our analysis combines rigorous analytical bounds with numerically exact evaluation of weakly non-Gaussian multimode states \cite{stornati2024variational}. We quantify nonlinear processing through IPC \cite{dambre2012information}, and model finite-ensemble effects through sampling distributions that are exact for Gaussian states and asymptotically accurate for the weakly non-Gaussian states considered here \cite{wishart1928generalised}. This allows us to assess not only ideal expressive power but also the operational visibility of the non-Gaussian advantage in realistic finite-resource regimes.

The main contributions of this work are threefold. First, we derive a general delay-resolved IPC bottleneck for linear reservoirs and identify covariance-based Gaussian CV quantum reservoirs as a central experimentally relevant realization. Second, we show numerically that experimentally accessible single-photon operations break the corresponding Gaussian bound, thereby establishing them as a genuine computational resource for CV QRC. Third, we show that the resulting excess capacity can be interpreted as an operational witness of non-Gaussian processing under minimal assumptions.

The rest of this paper is organized as follows. Section~II introduces the IPC framework and establishes the linear-reservoir bottleneck. Section~III presents the continuous-variable setting and the non-Gaussian photonic reservoir scheme considered in this work. Section~IV demonstrates the breaking of the Gaussian bound, the role of memory, and the finite-resource visibility of the non-Gaussian advantage. Section~V discusses the implications of these results, including the witness interpretation of excess capacity and the broader design principles suggested by the linear-reservoir bottleneck. Technical proofs and computational details are collected in the appendices.

\section{Information processing capacity of linear reservoirs}

\subsection{Delay-resolved information processing capacity}

\begin{figure*}[t]
\centering
\includegraphics[trim=0cm 4.5cm 0cm 0cm,clip=true,width=0.95\textwidth]{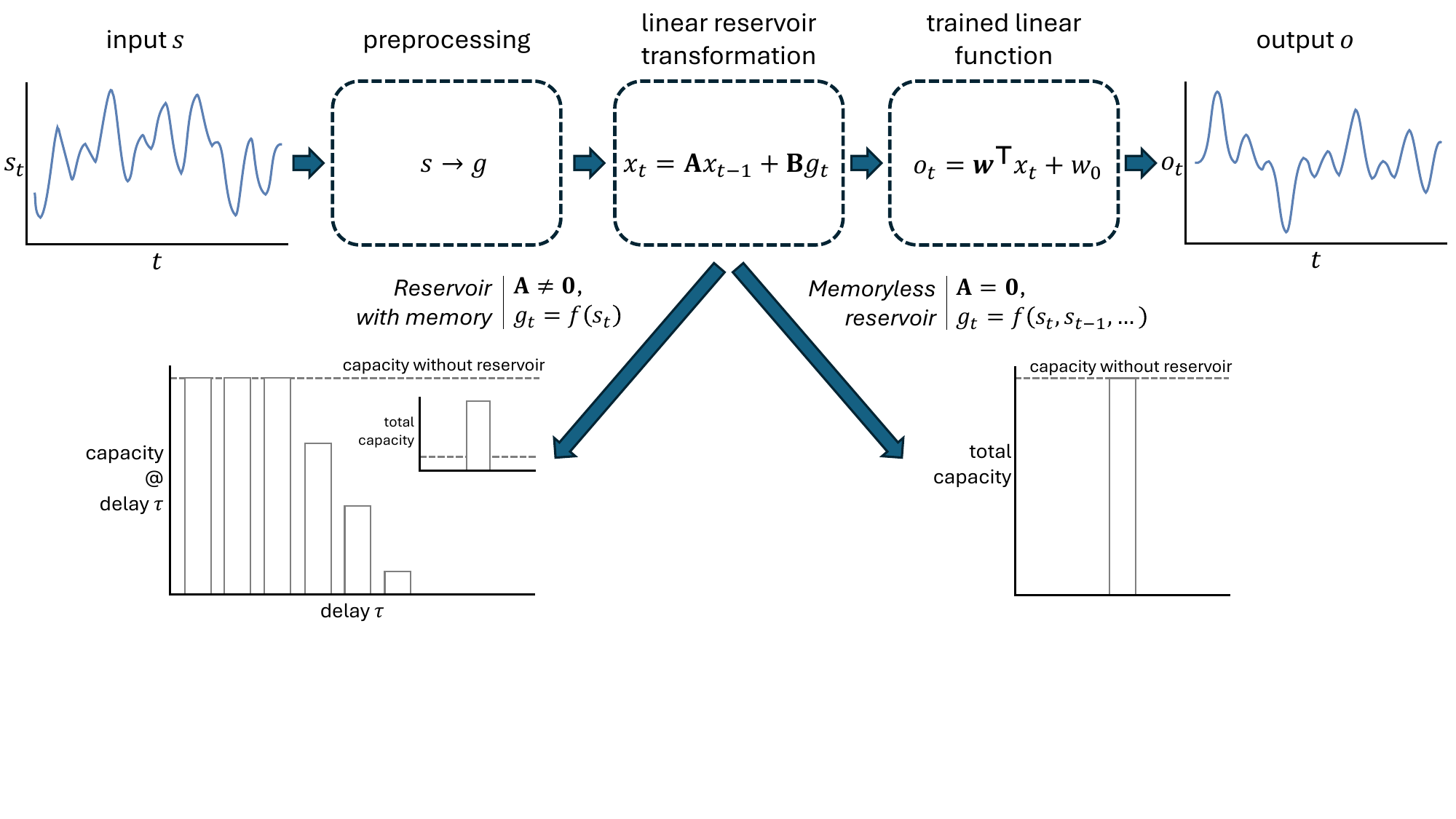}
\caption{\label{fig:limit} {\bf Schematic illustration of the linear-reservoir bottleneck.} Information processing capacity (IPC) is a task-independent measure of a dynamical system’s ability to retain and transform input information. In linear reservoirs, preprocessing may already generate nonlinear features, but the reservoir transformation itself cannot create new expressive power beyond what is present in those features. If memory is carried by the reservoir, the IPC at each fixed delay is bounded by the preprocessing-only IPC, even though capacities from different delays may accumulate in the total IPC. If memory is carried by preprocessing, the total IPC cannot exceed the preprocessing-only IPC. }
\end{figure*}

Reservoir computing is concerned with approximating temporal input-output maps. Let
\begin{equation}
s=\{\ldots,s_{t-1},s_t,s_{t+1},\ldots\}
\end{equation}
be an input time series, where each $s_t\in\mathbb{R}$ is indexed by the discrete time step $t$. The relevant targets are fading-memory functions, namely functions whose value at time $t$ depends continuously on a finite, or effectively finite, portion of the recent input history. In practice, the reservoir is driven by $s$, a set of observables is recorded, and only a final linear combination of those observables is trained. Further details are given in Appendix~\ref{app:training}.

To quantify the expressive power of a reservoir independently of any single task, we use the information processing capacity (IPC) \cite{dambre2012information}. For a target function $f$, the capacity is
\begin{equation}
C_f = 1-\mathrm{NMSE}_f,
\end{equation}
where $\mathrm{NMSE}_f$ is the normalized mean squared error achieved by the optimal linear readout. The IPC is obtained by summing these capacities over an orthonormal basis of fading-memory functions.

For concreteness, and following the standard construction of Ref.~\cite{dambre2012information}, we take the input to be independent identically distributed (iid) as $s_t\in[-1,1], \ s_t\sim \mathrm{Uniform(-1,1)}$ and use Legendre basis functions to resolve the dependence on nonlinearity degree and delay. A basis element of degree $d$ and delay $\tau$ is
\begin{equation}
P_d^\tau(s_t):=P_d(s_{t-\tau}),
\end{equation}
where $P_d$ denotes the Legendre polynomial of degree $d$, with the conventional normalization understood so that the resulting basis is orthonormal in the Hilbert space of fading memory functions. This makes it possible to resolve capacity not only by nonlinearity degree, but also by delay. We therefore define the delay-resolved IPC
\begin{equation}
\mathrm{IPC}^{(\tau)} := \sum_{d\ge 0} C_{P_d^\tau}.
\end{equation}

This quantity will play a central role below. While the total IPC measures the overall expressive power of a reservoir, $\mathrm{IPC}^{(\tau)}$ reveals how much expressive power is available at a specific delay. A reservoir may therefore have a large total IPC even if its capacity at each individual delay is strongly constrained. The bottleneck identified in this work is precisely of this type.

Recent work has made clear that information-processing capacity is not merely a task-independent diagnostic, but is directly connected to task performance: the error on a given task can be expressed as a weighted combination of the relevant IPC components, and total IPC alone may correlate poorly with task-specific accuracy \cite{hulser2023deriving}. This makes the delay-resolved viewpoint particularly useful. For example, a fixed-delay nonlinear memory task such as reconstructing $P_d(s_{t-\tau})$ probes a single delay sector and is therefore directly constrained by the bottleneck studied below \cite{dambre2012information}. By contrast, tasks such as delayed XOR or parity involve nonlinear combinations of information drawn from multiple past inputs and therefore depend on several delay sectors at once \cite{wringe2025reservoir,borghi2021reservoir}. The distinction between these two task classes will be important when interpreting the consequences of the linear-reservoir bottleneck.

\subsection{Linear reservoirs}

We consider reservoir computers in which the relevant observables can be collected into a finite-dimensional state vector $x_t\in\mathbb{R}^m$, and the preprocessed input into a vector $g_t\in\mathbb{R}^p$. The reservoir output is formed by a trained linear readout
\begin{equation}
o_t = \mathbf{w}^\top x_t + w_0,
\end{equation}
where $\mathbf{w}$ is a vector of trained weights and $w_0$ is a trained bias term.

The key assumption is that, in these chosen observables, the reservoir admits a finite-dimensional linear update,
\begin{equation}
x_t = \mathbf{A}x_{t-1}+\mathbf{B}g_t,
\label{eq:linear_reservoir_update}
\end{equation}
where $\mathbf{A}$ and $\mathbf{B}$ are constant matrices of appropriate dimension. More general affine updates can be reduced to this form by augmenting the state with the constant direction already accounted for by the bias term; for clarity, this standard step is given in the Appendix~\ref{app:affine}.

Equation~\eqref{eq:linear_reservoir_update} covers two qualitatively distinct situations:
\begin{enumerate}
    \item \textit{Memory in preprocessing, memoryless reservoir.} In this case $\mathbf{A}=\mathbf{0}$, whereas $g_t$ may already depend on several past inputs.
    \item \textit{Memoryless preprocessing, linear reservoir memory.} In this case $g_t=g(s_t)$ depends only on the current input, while temporal memory is generated by the reservoir recurrence through $\mathbf{A}$.
\end{enumerate}
We assume throughout that the relevant fading-memory conditions hold, so that the asymptotic IPC statements are well defined. 

Note that the scenario where memory is available both in the preprocessing and reservoir is also interesting: $\boldsymbol{A}\neq 0$, $g_t = g(s_t,s_{t-1},\ldots)$. However, the dynamics becomes much more complicated, as this would correspond to a deep (multi-layer) reservoir computer, and we leave this analysis for future work.

\subsection{Main theorem}

The key result is that linear reservoir transformations cannot create new delay-resolved expressive power. They can only redistribute or linearly reshape what has already been generated by preprocessing. In practical terms, the IPC of the preprocessing alone corresponds to using $o_t=\mathbf{w}^\top g_t+w_0$ instead of $o_t=\mathbf{w}^\top x_t+w_0$.

\begin{theorem}[]
\label{thm:linear_bottleneck}
Consider a reservoir computer of the form in Eq.~\eqref{eq:linear_reservoir_update} with linear readout and bias.

\begin{enumerate}
    \item[\textnormal{(A)}] If preprocessing has memory but the reservoir transformation is memoryless, then the total IPC after the reservoir is upper bounded by the IPC of the preprocessing features alone.
    
    \item[\textnormal{(B)}] If preprocessing is memoryless but the reservoir has linear memory, then for every fixed delay $\tau$ the delay-resolved IPC after the reservoir is upper bounded by the IPC of the preprocessing features alone.
\end{enumerate}
\end{theorem}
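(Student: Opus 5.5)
The argument reduces both parts to one fact: a linear readout of $x_t$ is also a linear readout of some finite collection of preprocessing features. The capacity of a target can then only shrink when we restrict the readout class.

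Throughout I use the standard characterization from Ref.~\cite{dambre2012information}. For a target $y_t$ of unit variance and a feature vector $z_t$ with bias, the optimal linear readout gives
\begin{equation}
C_y[z] = \max_{\mathbf{w},w_0}\Bigl(1-\frac{\mathbb{E}[(y_t-\mathbf{w}^\top z_t-w_0)^2]}{\mathbb{E}[y_t^2]}\Bigr),
\end{equation}
which is the squared norm of the orthogonal projection of $y$ onto $\mathrm{span}\{1,z_t\}$ in $L^2$ of the iid input measure. Two consequences follow. First, $C_y[z]$ is monotone under inclusion of spans: if every component of $z_t$ lies in $\mathrm{span}\{1,z'_t\}$, then $C_y[z]\le C_y[z']$. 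Second, for an orthonormal family $\{y^{(k)}\}$ the sum $\sum_k C_{y^{(k)}}[z]$ equals the squared norm of the projection of $\mathrm{span}\{y^{(k)}\}$ onto $\mathrm{span}\{1,z\}$.

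\textbf{Part (A).} Here $\mathbf{A}=0$, so $x_t=\mathbf{B}g_t$. Every readout $\mathbf{w}^\top x_t+w_0$ equals $(\mathbf{B}^\top\mathbf{w})^\top g_t+w_0$, so $\mathrm{span}\{1,x_t\}\subseteq\mathrm{span}\{1,g_t\}$. Monotonicity then gives $C_f[x]\le C_f[g]$ for every basis function $f$. Summing over the full orthonormal Legendre product basis yields $\mathrm{IPC}[x]\le\mathrm{IPC}[g]$, which is the claim.

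\textbf{Part (B).} Here $g_t=g(s_t)$. Using the fading-memory assumption, unroll the recursion as $x_t=\sum_{k\ge0}\mathbf{A}^k\mathbf{B}g(s_{t-k})$, which converges in $L^2$. The key step is to compute the projection of $x_t$ onto the delay-$\tau$ sector. Let $\mathcal{H}_\tau$ be the closed span of the basis functions $P_d(s_{t-\tau})$ for $d\ge0$, that is, the functions of $s_{t-\tau}$ alone. The terms $g(s_{t-k})$ with $k\ne\tau$ are independent of $s_{t-\tau}$, so their projection onto $\mathcal{H}_\tau$ is the constant $\mathbb{E}[g]$. Since the $d=0$ element is the constant, I separate it out and treat it using the bias.

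For $d\ge1$, each $P_d^\tau$ has zero mean, and I decompose
\begin{equation}
x_t-\mathbb{E}x_t=\mathbf{A}^\tau\mathbf{B}\bigl(g(s_{t-\tau})-\mathbb{E}g\bigr)+r_t,
\end{equation}
where the remainder $r_t=\sum_{k\ne\tau}\mathbf{A}^k\mathbf{B}\bigl(g(s_{t-k})-\mathbb{E}g\bigr)$ is orthogonal to all of $\mathcal{H}_\tau$, by independence and zero mean. For any readout $\mathbf{w}$, write $u=(\mathbf{A}^\tau\mathbf{B})^\top\mathbf{w}$. Pythagoras gives
\begin{equation}
\|P_d^\tau-\mathbf{w}^\top(x_t-\mathbb{E}x_t)\|^2=\|P_d^\tau-u^\top(g(s_{t-\tau})-\mathbb{E}g)\|^2+\|\mathbf{w}^\top r_t\|^2 .
\end{equation}
This is at least the error of the readout $u$ acting on $g$ at delay $\tau$. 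By stationarity of the iid input, that error equals the error of the preprocessing-only readout for the target $P_d(s_t)$, i.e. delay zero. Hence $C_{P_d^\tau}[x]\le C_{P_d^0}[g]$.

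Summing over $d$, the delay-resolved IPC satisfies $\mathrm{IPC}^{(\tau)}[x]\le\sum_d C_{P_d^0}[g]$. Because $g$ is memoryless, its capacity on every basis function with $\tau\neq0$, or involving more than one delay, vanishes. The right-hand side is therefore the full IPC of the preprocessing features alone.

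\textbf{Expected obstacle.} I expect the main difficulty to be handling the infinite sum rigorously. This means justifying $L^2$ convergence of the unrolled series and the exchange of projection with the series under the stated fading-memory condition, e.g. spectral radius of $\mathbf{A}$ below one. Some care is also needed with the constant $d=0$ term and the bias convention, so that the bound is not off by one unit.
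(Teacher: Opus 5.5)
Your proof is correct, and it reaches the result by a partly different route from the paper.

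\textbf{Where you follow the paper.} In part (B) you unroll the recursion and use independence across delays. This leaves only the matching-delay term $\mathbf{A}^{\tau}\mathbf{B}g(s_{t-\tau})$ able to overlap with $P_d^\tau$, which is the content of the paper's Lemma~\ref{lem:fixed_delay}. You then argue that a fixed linear map of $g$ cannot beat $g$ itself, as the paper does. Your explicit centering $g(s_{t-k})-\mathbb{E}g$ is the clean form of that lemma. As written, the paper's $\mathcal{V}_{\tau'}$ contains the constants, and the paper handles this only by a remark.

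\textbf{Where you differ.} The difference is in part (A) and in the last step of (B). The paper bounds IPC by the rank of the data matrix and uses $\mathrm{rank}(\mathbf{X}^{\mathrm{in}}\mathbf{B}^\top)\le\mathrm{rank}(\mathbf{X}^{\mathrm{in}})$. It then identifies that rank with the preprocessing IPC through the saturation result of \cite{dambre2012information}. It reuses this rank bound to pass from $\mathbf{A}^{\tau'}\mathbf{B}g$ to $g$ in (B). You instead use $\mathrm{span}\{1,x_t\}\subseteq\mathrm{span}\{1,g_t\}$ and monotonicity of orthogonal projections, target by target. In (B) you use the pull-back $u=(\mathbf{A}^{\tau}\mathbf{B})^\top\mathbf{w}$ plus stationarity, so (B) no longer leans on (A).

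\textbf{What each route buys.} Your route gives per-basis-function inequalities, $C_f[x]\le C_f[g]$ and $C_{P_d^\tau}[x]\le C_{P_d^0}[g]$. These bound degree-resolved and cross-term sums as well. Your route also avoids the idealized identity $\mathrm{IPC}=\mathrm{rank}$, which needs a complete basis and infinite data. Centering absorbs an affine offset $\mathbf{c}$ directly. By contrast, the paper's augmentation gives $\tilde{\mathbf{A}}$ an eigenvalue $1$, so $\tilde{\mathbf{A}}^t x_0\to 0$ holds only up to a constant left for the bias. The rank route, for its part, is shorter. It yields the feature-counting form of the bound, which is what produces the plotted Gaussian limit (linear in $N$) against the absolute limit (quadratic in $N$).

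\textbf{Your flagged obstacles.} Both are benign. Orthogonal complements are closed, so $r_t\perp\mathcal{H}_\tau$ survives the $L^2$ limit when the spectral radius of $\mathbf{A}$ is below one. With the bias, $C_{P_0}=1$ on both sides, so the $d=0$ term does not shift the bound.
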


The proof is simple in structure. In case \textnormal{(A)}, the reservoir acts only as a linear map on the features produced by preprocessing and therefore cannot increase the number of linearly independent fading-memory functions. In case \textnormal{(B)}, iterating Eq.~\eqref{eq:linear_reservoir_update} expresses each observable as a sum of terms depending on one delay at a time. Basis functions supported on different delays are orthogonal, so when reconstructing a target at delay $\tau$, only the term with the matching delay can contribute; all other delays are irrelevant. The corresponding delay-resolved capacity is therefore bounded by what preprocessing alone already provides. The detailed proof is given in the Appendix~\ref{app:proof_linear}. The theorem is illustrated in Fig.~\ref{fig:limit}.

The theorem identifies a hidden bottleneck. Since capacities from different delays can accumulate, the total IPC may remain large even when the expressive power available at any fixed delay is sharply limited.

Theorem~\ref{thm:linear_bottleneck} applies to any reservoir architecture whose chosen observables admit a finite-dimensional linear update. In the remainder of this paper, we focus on an experimentally important realization of this general setting, namely continuous-variable quantum reservoirs with covariance-based readout. In that case, Gaussian dynamics provide a concrete instance of the linear-reservoir bottleneck, while single-photon operations offer a minimal route beyond it.

\section{Continuous-variable quantum reservoirs}

We now turn to the experimentally relevant realization considered in the remainder of this work, namely continuous-variable (CV) quantum reservoirs implemented with bosonic optical modes. Our goal in this section is not to review CV quantum optics in full, but only to introduce the minimal notions needed to understand why Gaussian reservoirs fall within the scope of Theorem~\ref{thm:linear_bottleneck}, and how a minimal non-Gaussian extension can break the corresponding bound.

\subsection{Bosonic modes, quadratures, and covariance matrices}

A bosonic mode is described by annihilation and creation operators $\hat a_j$ and $\hat a_j^\dagger$ satisfying
\begin{equation}
[\hat a_j,\hat a_k^\dagger]=\delta_{jk}.
\end{equation}
For each mode we introduce the quadrature operators
\begin{equation}
\hat x_j=\hat a_j+\hat a_j^\dagger,
\qquad
\hat p_j=\frac{\hat a_j-\hat a_j^\dagger}{i},
\end{equation}
which play the role of generalized position and momentum. Linear combinations of these operators are routinely measured in quantum optics via homodyne detection~\cite{Leonhardt1995}. Collecting all quadratures into a single vector,
\begin{equation}
\hat{\bm{\xi}}=(\hat x_1,\hat p_1,\hat x_2,\hat p_2,\ldots)^\top,
\end{equation}
the corresponding covariance matrix is defined as
\begin{equation}
\sigma_{mn}
=
\frac{1}{2}
\langle
\hat \xi_m \hat \xi_n+\hat \xi_n \hat \xi_m
\rangle
-
\langle \hat \xi_m\rangle\langle \hat \xi_n\rangle .
\end{equation}

In this work, the elements of the covariance matrix constitute the relevant observables. This choice is natural experimentally, since they can be accessed from homodyne statistics, and it is central theoretically, because Gaussian transformations act linearly on first and second moments. As a result, covariance-based Gaussian reservoirs provide a concrete realization of the linear-reservoir setting introduced in Sec.~II.

\subsection{Gaussian states and dynamics}
\begin{figure*}[t]
\centering
\includegraphics[trim=0cm 8.0cm 0cm 0cm,clip=true,width=0.95\textwidth]{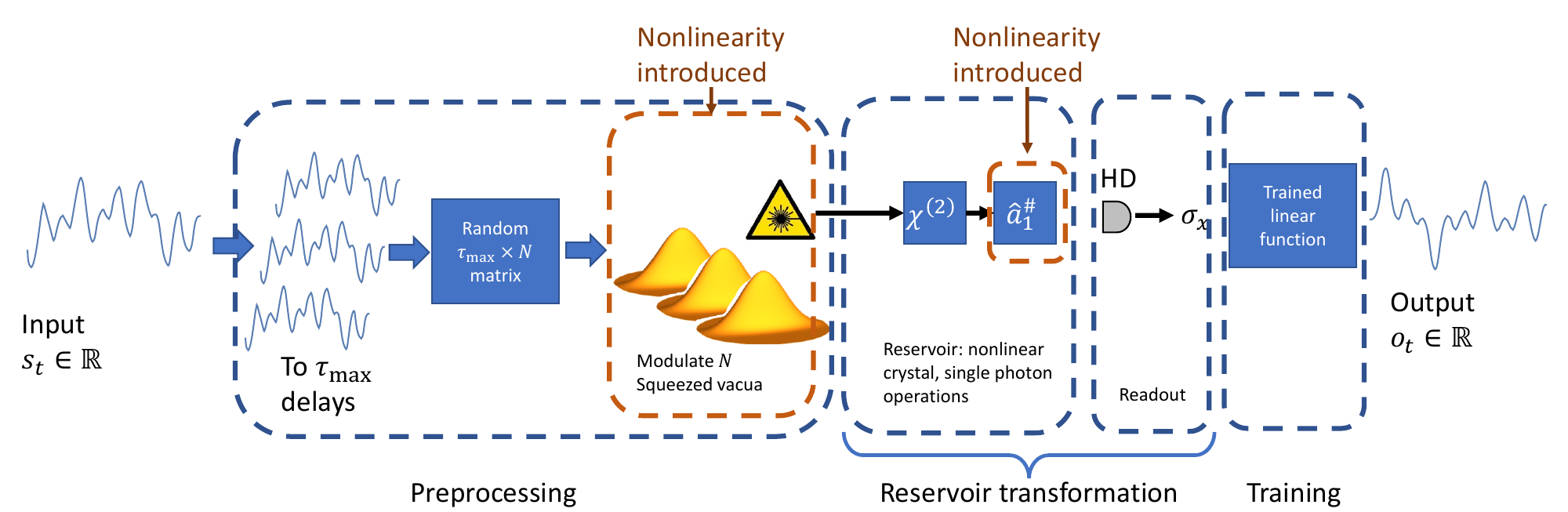}
\caption{\label{fig:scheme} \textbf{The considered scheme.} Memory is classical, which keeps the scheme within reach of experiments. The output is formed from the elements of the covariance matrix of $x$-quadratures $\sigma_x$. Nonlinearity in input is introduced by the input dependent modulation of squeezed states and again in the de-Gaussification of the reservoir state via photon addition or subtraction on mode 1 (assumed to be deterministic, or post-selected).}
\end{figure*}
Gaussian states are states whose Wigner function is Gaussian in phase space. Equivalently, they are completely characterized by the first moments and covariance matrix of the quadratures. Gaussian operations are generated by Hamiltonians at most quadratic in the field operators, and they map Gaussian states to Gaussian states.

At the level of covariance matrices, a Gaussian transformation acts as
\begin{equation}
\sigma \mapsto S \sigma S^\top,
\label{eq:gaussian_cov_update}
\end{equation}
where $S$ is a real symplectic matrix. After vectorization, Eq.~\eqref{eq:gaussian_cov_update} becomes a linear map on the vector of covariance-matrix elements. Consequently, covariance-based Gaussian reservoirs belong to the class of linear reservoirs considered in Theorem~\ref{thm:linear_bottleneck}. The Gaussian bound studied below is therefore a direct specialization of the general linear-reservoir bottleneck.

This observation is important conceptually. The limitation derived in Sec.~II is not specific to one photonic architecture; rather, Gaussian CV reservoirs provide an analytically tractable and experimentally mature setting in which the bottleneck becomes especially transparent.

\subsection{Photonic reservoir schemes}

We consider a photonic reservoir driven by an input time series through squeezed-vacuum encoding. Each input sample is first mapped to parameters of single-mode squeezed states, producing an input-dependent multimode Gaussian state. Correlations between modes are then generated by a fixed Gaussian multimode transformation, which may be implemented, for example, by an optical network \cite{centrone2021cost} or by propagation through a $\chi^{(2)}$ medium. This keeps the architecture close in spirit to existing Gaussian photonic reservoirs, such as the ones in Refs.~\cite{nokkala2021gaussian,nokkala2021high,garcia2023scalable,garcia2023squeezing,nokkala2023online,nokkala2024retrieving,paparelle2025experimental}, while introducing the minimal non-Gaussian ingredient needed to surpass the Gaussian bound; related CV photonic variational architectures have also combined Gaussian processing with photon-subtraction-based nonlinearities and homodyne readout outside the reservoir-computing setting \cite{stornati2024variational,krasimirov2025hardware,polo2025non}.

If the reservoir contains only the above ingredients, then the full transformation from the encoded input covariance matrix to the measured covariance matrix remains Gaussian and therefore falls within the scope of Theorem~\ref{thm:linear_bottleneck}. In particular, no new delay-resolved expressive power can be generated beyond what is already present in the encoded input.

To break this bottleneck, we introduce a minimal non-Gaussian extension based on conditional single-photon operations. Concretely, after the Gaussian multimode transformation, we apply mode-selective photon subtraction or addition,
\begin{equation}
\hat a_j
\qquad \text{or} \qquad
\hat a_j^\dagger,
\end{equation}
conditioned on successful heralding events. These operations are intrinsically non-Gaussian and have been demonstrated experimentally in multimode photonic settings. Their role in the present context is to generate observables that are no longer constrained by the Gaussian linear-reservoir bound.

The setup is presented in Fig.~\ref{fig:scheme}. Memory of recent input history is achieved with classical means; here an approach based on preprocessing is depicted. We return to the details in the next Section.

This setup is attractive for two reasons. First, it remains close to experimentally established Gaussian photonic reservoir schemes: Gaussian state preparation, Gaussian mode mixing, and homodyne detection are all standard ingredients. Second, it introduces the non-Gaussian resource in a minimal and operationally meaningful way, allowing us to directly test whether single-photon operations can produce an excess capacity above the Gaussian bound. 

\subsection{Readout and scope of the analysis}

Throughout this work, the readout is constructed from homodyne data by estimating the covariance matrix of selected quadratures and training a linear combination of its elements. This keeps the readout procedure close to existing continuous-variable reservoir-computing experiments and ensures that any excess delay-resolved capacity can be attributed to the non-Gaussian processing itself rather than to nonlinear post-processing at the output stage.

The remainder of the paper specializes the general theorem of Sec.~II to this CV setting. We first quantify the Gaussian bound for covariance-based reservoirs and then study how conditional single-photon operations break it in practice.

\section{The Gaussian bound and how to break it}

We now turn to the central numerical question of this work: can experimentally accessible non-Gaussian operations generate an information-processing capacity that is forbidden to Gaussian reservoirs? Theorem~\ref{thm:linear_bottleneck} implies that, for covariance-based Gaussian continuous-variable reservoirs, the capacity available at any fixed delay is bounded by the capacity already present in the encoded input. In the memoryless case this yields a Gaussian limit that scales only linearly with reservoir size, whereas the absolute capacity allowed by the number of trained observables scales quadratically. This is deduced counting the maximum number of independent elements of the covariance matrix. Any observed excess above the Gaussian limit therefore signals a genuine breakdown of the Gaussian bottleneck.

The following is a direct consequence of Theorem~\ref{thm:linear_bottleneck}:

\begin{corollary}[]
\label{cor:witness}
Under the assumptions of Theorem~\ref{thm:linear_bottleneck} and its Gaussian specialization, any observed delay-resolved capacity exceeding the corresponding Gaussian bound rules out a Gaussian explanation of the effective reservoir transformation.
\end{corollary}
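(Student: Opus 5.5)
The corollary is the contrapositive of Theorem~\ref{thm:linear_bottleneck} specialized to the Gaussian setting, so the plan has three steps.

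\textbf{Step 1: show that Gaussian reservoirs satisfy the theorem's hypotheses.} Take $x_t$ to be the vectorized covariance matrix of the measured quadratures. Take $g_t$ to be the vectorized encoded input covariance matrix, produced by the squeezed-vacuum modulation. Any Gaussian reservoir step acts as $\sigma\mapsto S\sigma S^\top$, possibly combined with Gaussian mixing with the previous state or with Gaussian noise channels. After vectorization this is affine in the covariance entries. Appendix~\ref{app:affine} reduces the affine update to the linear form of Eq.~\eqref{eq:linear_reservoir_update}, with the constant direction absorbed into the bias $w_0$. The same holds for the identification used in the memoryless case, where $\mathbf{A}=\mathbf{0}$ and $x_t=\mathbf{B}g_t$ with $\mathbf{B}$ the vectorization of $S(\cdot)S^\top$. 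The readout is linear with bias by construction, as fixed in the readout subsection.

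\textbf{Step 2: apply the theorem to define the Gaussian bound.} Theorem~\ref{thm:linear_bottleneck}, in case (A) or (B) as appropriate, then gives
\[
\mathrm{IPC}^{(\tau)}_{\mathrm{Gauss}} \le \mathrm{IPC}^{(\tau)}_{g},
\]
where the right-hand side is the preprocessing-only capacity. In case (A) the analogous inequality holds for the total IPC. This right-hand side is what we call the Gaussian bound, and it depends only on the encoding, not on $S$.

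\textbf{Step 3: conclude by contraposition.} Suppose the measured data show $\mathrm{IPC}^{(\tau)}_{\mathrm{obs}}>\mathrm{IPC}^{(\tau)}_{g}$. If some Gaussian effective map compatible with the stated assumptions reproduced the data, the observed capacity would equal that map's capacity. Step~2 then forces it to be at most $\mathrm{IPC}^{(\tau)}_{g}$, which is a contradiction. Hence no Gaussian transformation explains the data.

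\textbf{Main obstacle.} The delicate part is Step~1, specifically its uniformity. The bound must hold for \emph{every} Gaussian map, including non-unitary ones such as loss, added noise, and partial tracing, and it must be independent of the unknown $S$ so that the witness is usable on a black box. I would handle this through the general Gaussian channel action $\sigma\mapsto X\sigma X^\top+Y$. This is affine in $\sigma$ for any $X,Y$, so the same reduction applies. Since the bound $\mathrm{IPC}^{(\tau)}_{g}$ involves only $g_t$, it is uniform over all such channels.

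A second point is that the capacities should be understood asymptotically, with infinite samples. Finite-ensemble fluctuations are not covered by this argument and belong to the later statistical analysis.
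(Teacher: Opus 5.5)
Your proposal is correct and follows the paper's route. The corollary is the contrapositive of Theorem~\ref{thm:linear_bottleneck} once Appendix~\ref{app:gaussian_example} identifies the vectorized Gaussian covariance update with Eq.~\eqref{eq:linear_reservoir_update}, with a bound that depends only on the encoding. Your treatment of non-unitary Gaussian channels via the affine action $\sigma\mapsto X\sigma X^\top+Y$ plus the reduction of Appendix~\ref{app:affine} is a slightly cleaner justification than the paper's informal remark that dissipation can only decrease IPC; in case (B), read your right-hand side as the capacity of the shifted features $g(s_{t-\tau})$, i.e.\ the full preprocessing IPC, not the delay-$\tau$ capacity of $g_t$ itself, which vanishes for $\tau>0$.
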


This witnessing strategy is operational, in the sense that it is formulated directly in terms of input-output processing data and the measured observables used for the readout. It does not require full state tomography or access to the internal reservoir state. At the same time, its scope should be interpreted carefully. What is ruled out is a \emph{Gaussian explanation of the effective reservoir transformation} under the stated assumptions. In particular, the witness does not by itself distinguish whether non-Gaussianity originates from the reservoir dynamics, the measurement stage, or another part of the effective processing map not included in preprocessing. It is therefore best understood as a black-box witness of non-Gaussian processing at the level relevant for information processing capacity.

Throughout the rest of this work, the input state is a product state of single mode squeezed vacua. Each mode has a fixed squeezing magnitude $r=0.75$ whereas the phase of squeezing $\varphi$ is modulated by the input as shown in the preprocessing box of Fig.~\ref{fig:scheme}. Concretely, $\varphi_t=\sum_{\tau=0}^{\tau_\mathrm{max}} c_\tau s_{t-\tau}$, where each $c_\tau\in[0.1/\tau_\mathrm{max},2\pi/\tau_\mathrm{max}]$ is distributed uniformly, and chosen independently for each delay and for each mode. As noted before, the input $s_t\in[-1,1]$ is uniformly distributed.

\subsection{Quantum extreme learning machine}

We begin with the memoryless case by setting $\tau_\mathrm{max}=0$, corresponding to a quantum extreme learning machine. The input is divided into training and testing phases of $5000$ and $1000$ timesteps, respectively. The observables from the training phase are used only to optimize the weights of the linear readout function. Then, data is collected during the testing phase to calculate capacity.

Figure~\ref{fig:qelm_capacities} summarizes the behavior in this regime. The upper panel shows the total capacity as a function of reservoir size $N$, together with the Gaussian limit and the absolute limit set by the number of available observables. The purely Gaussian case remains below the predicted Gaussian bound, as expected. By contrast, the introduction of conditional single-photon operations drives the observed capacity into the region forbidden to Gaussian reservoirs. Moreover, the violation becomes stronger as additional operations are introduced, establishing them as a genuine computational resource.

The lower panel of Fig.~\ref{fig:qelm_capacities} clarifies why saturating the absolute limit rapidly becomes difficult even after the Gaussian bound is broken. It shows how the total capacity is distributed across polynomial degrees in the memoryless case. Since the total capacity is the area under this curve, its growth requires support on progressively higher-degree basis functions at the same delay. In other words, once memory is absent, all nonlinear expressive power must be concentrated into a single delay sector. This provides a natural explanation for the observed slow growth: non-Gaussian operations open access to functions that are inaccessible to Gaussian reservoirs, but concentrating all expressive power into one delay becomes increasingly inefficient as the reservoir grows.

By Corollary~\ref{cor:witness}, the excess above the Gaussian bound already constitutes an operational witness of non-Gaussian processing in the effective reservoir transformation.

\begin{figure}[h]
\centering
\includegraphics[width=\linewidth]{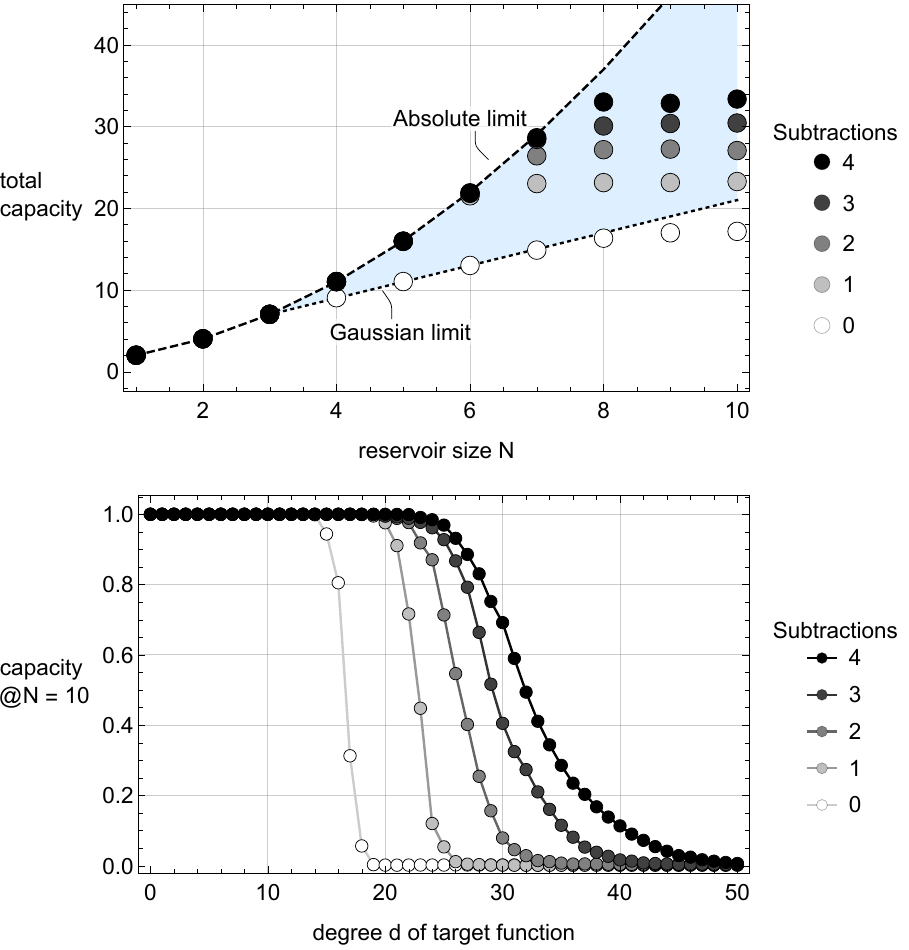}
\caption{\label{fig:qelm_capacities}
\textbf{Breaking the Gaussian limit in the case of a memoryless reservoir and memoryless preprocessing.} Top panel: total capacity as a function of the number of optical modes $N$ for a purely Gaussian reservoir and for reservoirs augmented by conditional single-photon operations. The dotted curve indicates the Gaussian limit, while the dashed curve indicates the absolute limit set by the number of trained weights. The shaded region is therefore inaccessible to Gaussian reservoirs but becomes reachable once non-Gaussian operations are introduced. Bottom panel: breakdown of the capacity among polynomial degrees. Since the total capacity is the area under this curve, approaching the absolute limit requires support on increasingly high-degree basis functions, making saturation progressively harder already at modest reservoir sizes. All results are averaged over $80$ realizations.}
\end{figure}

\subsection{Quantum reservoir computer\label{sec:toQRC}}

The previous result suggests that the slow growth observed in the memoryless case is not due only to weak non-Gaussianity, but also to the structure of the IPC basis itself: all nonlinear expressive power must be concentrated at one delay. Introducing memory changes this dramatically. When multiple delays are available, the reservoir can realize products of lower-degree functions at different delays instead of relying exclusively on very high-degree functions at a single delay.

With memory present, the input $s$ should be divided into three phases. The first is the preparation or washout phase which establishes the proper dependency on recent input history. The other two are the normal training and testing phases. We have used $100$, $5000$ and $1000$ timesteps for these phases respectively.

\begin{figure}[h]
\centering
\includegraphics[width=\linewidth]{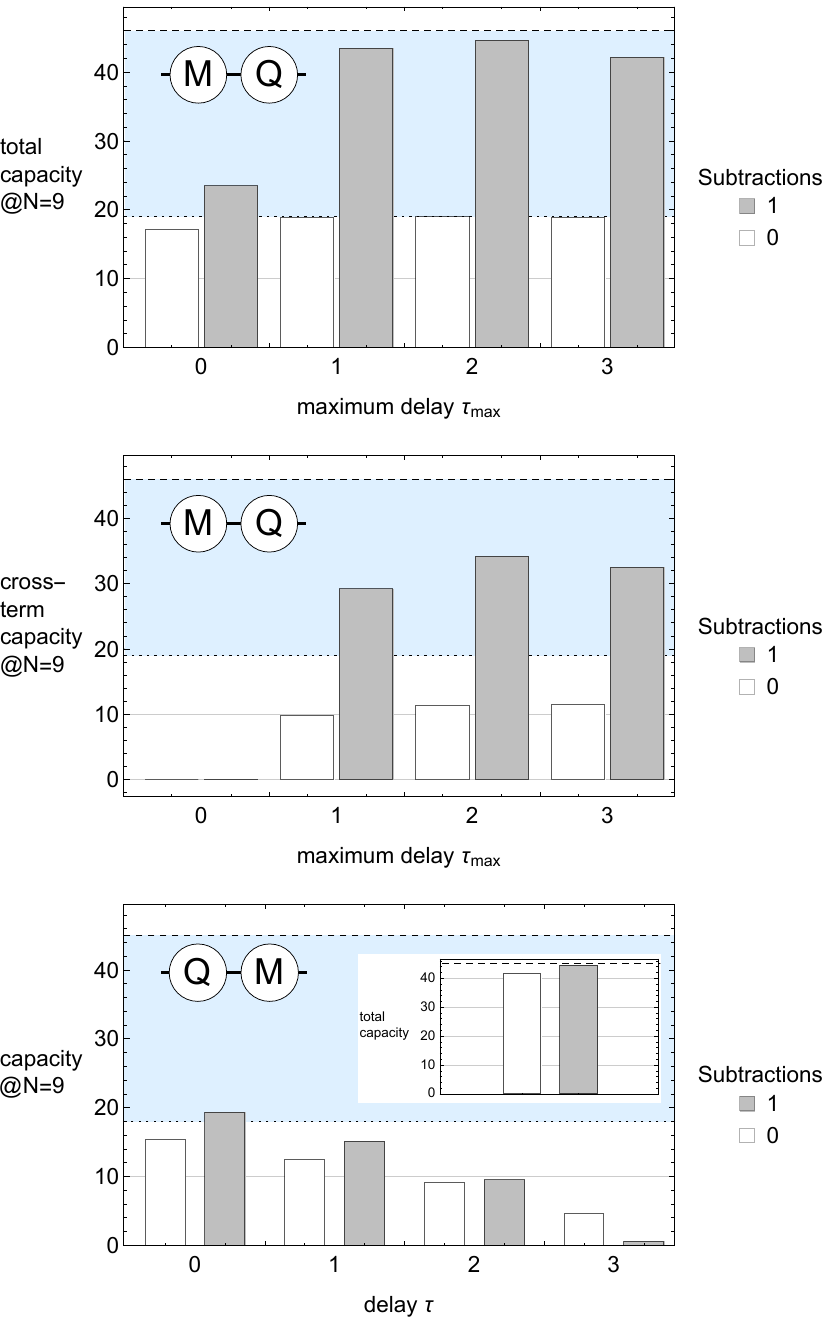}
\caption{\label{fig:qrc_capacities}
\textbf{Breaking the Gaussian limit when there is memory.} Top panel: total capacity as a function of memory length $\tau_\mathrm{max}$ when memory is provided by the preprocessing. Relative to the memoryless case ($\tau_\mathrm{max}=0$), allowing multiple delays strongly alleviates the high-degree bottleneck and enables the non-Gaussian reservoir to approach the absolute limit much more efficiently. Middle panel: capacity of cross-terms only. When preprocessing provides memory, even the Gaussian case is able to have capacity from nonlinear functions featuring multiple delays. This leaves the bound unchanged. Bottom panel: capacity as a function of delay when memory is provided after the quantum reservoir. Here this is accomplished by a simple leaky neuron. The Gaussian bound restricts delay-resolved capacity, but not the total capacity. Neither Gaussian nor non-Gaussian cases have cross-term capacity. All results are averaged over $80$ realizations.}
\end{figure}

Figure~\ref{fig:qrc_capacities} shows the corresponding behavior when memory is introduced through classical preprocessing by setting $\tau_\mathrm{max}>0$. Focusing on the top panel, the contrast with the memoryless case is striking. Even a small number of nontrivial delays substantially enhances the accessible capacity, and the non-Gaussian reservoir approaches the absolute limit much more efficiently than in the memoryless setting. This supports the interpretation suggested by Fig.~\ref{fig:qelm_capacities}: memory does not merely increase the total capacity quantitatively, but changes how that capacity can be assembled from the IPC basis. Once multiple delays are available, the reservoir no longer needs to rely so heavily on extremely high-degree functions at a single delay.

Another point that deserves special analysis is the impact of cross terms. In this context, a cross-term is a nonlinear function featuring multiple delays; for example, $P_{d_1}^{\tau_1}(s_t)P_{d_2}^{\tau_2}(s_t)$ is a cross-term when $d_1,d_2>0$ and $\tau_1\neq\tau_2$. Their significance comes from the fact that normally, a Gaussian reservoir with coherent memory is known to not have them when using covariances \cite{nokkala2021gaussian,nokkala2021high}. Indeed, Gaussian reservoirs should have capacities of a (linear) reservoir with memory, as in Fig.~\ref{fig:limit}. 

Unlike one might expect based on earlier literature~\cite{nokkala2021gaussian,nokkala2021high}, the presence of cross-terms in the memory does not rule out Gaussianity of the reservoir, nor does it mean the Gaussian bound does not apply. Introducing the memory through preprocessing, as done here, means that case A of Theorem~\ref{thm:linear_bottleneck} applies, constraining the total capacity, allowing in particular the presence of cross-terms. The middle panel of Fig.~\ref{fig:qrc_capacities} shows the contributions from cross-terms only. These are included in the total capacity of the top panel along with the diagonal terms of the form $P_{d}^{\tau}(s_t)$.

The bottom panel of Fig.~\ref{fig:qrc_capacities} shows the case where memory is placed after the quantum reservoir. In fact, from the standpoint of Theorem~\ref{thm:linear_bottleneck} this situation falls into the category where a reservoir has memory, leading to bounded delay-resolved capacities. In the case at hand, the memory is implemented with a simple leaky neuron. Concretely, given measured reservoir observables collected in state vector $x_t$ with $\tau_\mathrm{max}=0$, we form the final output vector $o_t$ by setting $o_t=\rho v^\top o_{t-1}+(1-\rho)x_t$. Here, $v$ is a random vector with elements uniformly and independently distributed in $[0,1]$. We have set $\rho=0.001$, which concentrates the capacity to small delays and leads to the breaking of the bound.

\subsection{Finite resource visibility of non-Gaussianity}

The previous results were obtained in the ideal limit in which expectation values are evaluated exactly. In practice, however, the covariance matrix is estimated from a finite ensemble size $M$ of homodyne outcomes. It is therefore essential to check whether the non-Gaussian advantage, and with it the witness based on excess capacity, remains visible under realistic finite-resource conditions.

The details of the statistical noise model can be found in Appendix~\ref{app:finite_resources}. Essentially, we assume that the covariance matrix elements are affected by Gaussian noise of fixed intensity. More precisely, the reconstructed covariance matrices follow a Wishart distribution~\cite{wishart1928generalised}.  Despite its simplicity, this model is expected to be accurate in the large sample limit, and is exact for Gaussian reservoirs.

\begin{figure}[t]
\centering
\includegraphics[width=\linewidth]{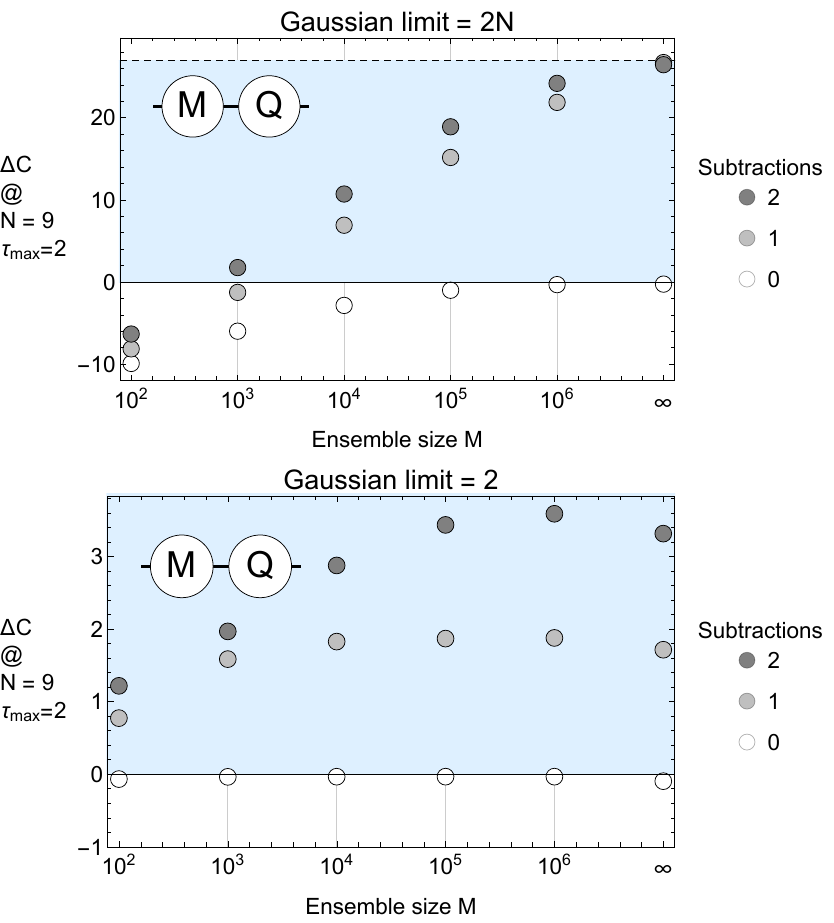}
\caption{\label{fig:finite_resources}
Witnessing the non-Gaussianity of the reservoir with finite resources. Excess capacity $\Delta C$ is the difference between observed capacity and the Gaussian limit. Top panel: $\Delta C$ as a function of ensemble size $M$. Excess capacity degrades gracefully as statistical noise in the observables increases, but eventually the noise washes it out. Bottom panel: excess capacity when the Gaussian limit is artificially restricted improves finite-resource visibility. Unlike elsewhere, here all input modes have identical encoding, turning the Gaussian limit constant. This decreases capacity but allows a finite $\Delta C$ at lower ensemble sizes. All results are averaged over 80 realizations.
}
\end{figure}

Unlike elsewhere in this work, where it would not have observable consequences, here we use thresholding, which means we set small capacities to zero, using the approach of Ref.~\cite{dambre2012information}. This is to avoid overestimating the actual capacity and in this way avoid false positives. The probability to interpret a vanishing capacity as finite, which controls the value of the threshold, has been set to $p=10^{-10}$. We also set the preparation and training phase to $100$ and $10000$ timesteps, respectively. 

Figure~\ref{fig:finite_resources} compares the ideal and finite-ensemble cases The top panel shows the behavior of the excess capacity $\Delta C$ as a function of ensemble size $M$. As expected, finite sampling reduces the measured capacity in both Gaussian and non-Gaussian cases. Nevertheless, the qualitative separation between the Gaussian and non-Gaussian reservoirs persists: non-Gaussian operations still shift capacity toward higher degrees and maintain a larger total capacity over a broad range of reservoir sizes. This shows that the breaking of the Gaussian bound is not merely an idealized infinite-sampling effect, but can remain operationally visible in realistic regimes.

The bottom panel shows the case where the Gaussian limit has been artificially restricted by using identical phase modulation for all modes. In this case the Gaussian limit is just two, provided by one sine and one cosine whose arguments are directly proportional to the input. These functions come directly from the covariance matrix of a single squeezed mode. We note that $\Delta C$ appears to even increase when going to finite resources case. This may be caused by the sampling noise playing the role of regularization \cite{nokkala2021high}.

Interestingly, while this decreases capacity in all cases it also substantially improves the visibility of a finite $\Delta C$. This could be particularly relevant when the goal is to witness the non-Gaussianity of the reservoir, rather than use it for QRC.

\section{Discussion and outlook}

The main result of this work is the identification of a hidden bottleneck in information processing capacity when the relevant observables of a reservoir admit a finite-dimensional linear update. Under these conditions, the reservoir cannot create new delay-resolved expressive power on its own: it can only reshape what preprocessing has already made available. In continuous-variable quantum reservoir computing based on covariance-matrix observables, this yields a concrete Gaussian bound. The importance of this result is not only that it formalizes an intuitive limitation of Gaussian reservoirs, but also that it reveals where the limitation actually lies. The bottleneck is delay resolved and may therefore remain invisible in aggregate performance measures or in the total information processing capacity.

Our numerical results show that conditional single-photon operations provide a direct route beyond this limitation. In the memoryless setting they generate capacities that enter the region forbidden to Gaussian reservoirs, thereby establishing them as a genuine computational resource for continuous-variable quantum reservoir computing. At the same time, the memoryless case also reveals a second feature: breaking the Gaussian bound is not by itself enough to saturate the absolute capacity limit. The observed slow growth suggests that when only one delay is available, increasing total capacity requires access to progressively higher-degree functions at that same delay, making the memoryless setting intrinsically difficult to scale.

Introducing memory changes this picture qualitatively. Once multiple delays are available, expressive power can be assembled from products of lower-degree functions supported on different delays instead of being forced into very high degrees at a single delay. Our numerical evidence suggests that this strongly alleviates the scaling bottleneck and allows non-Gaussian resources to be used much more efficiently. In this sense, the paper points to a practical design principle: when nonlinear processing is costly, distributing it across multiple delays can be far more effective than concentrating it into a memoryless architecture.

The witness interpretation of excess capacity should be understood in this operational sense. Under the assumptions of Theorem~\ref{thm:linear_bottleneck}, any observed excess above the corresponding Gaussian bound rules out a Gaussian explanation of the effective reservoir transformation. The witness is therefore formulated directly in terms of input-output processing data and the measured observables used for the readout, without requiring full state tomography or access to the internal reservoir state. At the same time, its scope is limited: it is a sufficient but not necessary witness, and it certifies non-Gaussian \emph{processing} under the stated assumptions rather than providing a universal characterization of all possible sources of non-Gaussianity in the experiment.

These results also clarify the role of Gaussian reservoirs. Being subject to the bound does not make a reservoir useless. For many tasks, the available delay-resolved expressive power may still be sufficient, and Gaussian reservoirs remain attractive because of their simplicity, scalability, and robustness. Moreover, Gaussian mode mixing can still help redistribute information among accessible observables even when it does not break the bottleneck. The present bound should therefore not be read as a no-go theorem against Gaussian quantum reservoir computing, but rather as a guide to when Gaussian resources are enough and when genuinely non-Gaussian ingredients become necessary.

Several directions remain open. On the theoretical side, it will be important to better understand the scaling bottleneck observed in the memoryless case and to quantify more precisely how memory alleviates it. On the numerical side, the study of larger numbers of conditional single-photon operations may clarify whether the gains eventually saturate or continue to accumulate in experimentally relevant regimes. On the practical side, finite-resource effects deserve further attention, especially in connection with measurement overhead and the tradeoff between direct non-Gaussian processing and Gaussian strategies supplemented by more elaborate post-processing. While continuous-variable Gaussian reservoirs provide an especially clean and experimentally relevant realization of the linear-reservoir bottleneck, the broader theorem suggests that similar architectural limitations may arise in other reservoir platforms whenever the chosen observables admit a finite-dimensional linear update. Finally, Theorem 1 does not link the Gaussianity of the reservoir to bounds on IPC for an arbitrary choice of observables. One may consider, e.g., using higher-order moments \cite{garcia2023squeezing}, or less orthodox choices such as the cumulative distribution function \cite{hahto2025smarter} to extract even substantially more power. Nevertheless, our results have provided a fertile ground for further investigations and may eventually contribute to a formal resource theory.

In summary, our results identify a structural limitation of Gaussian and, more generally, linear reservoir transformations, show how experimentally accessible non-Gaussian operations can overcome it, and provide a concrete framework for using excess capacity as an operational signature of non-Gaussian processing. Together, these results help clarify which resources are genuinely responsible for enhanced expressive power in continuous-variable quantum reservoir computing and suggest concrete directions for the design of more powerful near-term reservoir architectures.

\begin{acknowledgments}
We thank Roberta Zambrini and Valentina Parigi for fruitful discussions. JN gratefully acknowledges financial support from the Research Council of Finland under Project No. 348854. JN  acknowledges The Finnish Quantum Institute (InstituteQ). F.C. acknowledges funding by the European Union (EQC, 101149233, PASQuanS2.1, 101113690), the Government of Spain (Severo Ochoa CEX2019-000910-S, FUNQIP and NextGenerationEU PRTR-C17.I1).
\end{acknowledgments}

\bibliography{references}

\clearpage
\onecolumngrid

\appendix

\section{Reservoir computing\label{app:training}}

Reservoir computing is concerned with transformations between two time series. Let $\mathbf{s}=\{\ldots,s_{t-2},s_{t-1},s_t\}$ be the input time series up to some fixed timestep $t$ and $\mathbf{o}=\{\ldots,o_{t-2},o_{t-1},o_t\}$ be the corresponding output time series. The two are related by some fixed but unknown functional $F$ such that $s_t=F[\mathbf{o}]$. Tasks can then be solved by some well-behaved functional $F$. In this context, well-behaved means that $F$ can be well approximated by some continuous function $f$ of only a finite number of past inputs. With a slight abuse of terminology, such $F$ are called fading memory functions. They can be shown to form a real Hilbert space when the inner product is defined via integration of a product of two functions \cite{dambre2012information}; this leads to a notion of orthogonality and orthonormal bases. This space of fading memory functions is infinite-dimensional.

In QRC, such approximations are achieved by driving a quantum system with $\mathbf{s}$ and post-processing its response to approximate the desired output. A generic QRC scheme consists of an input mechanism, the reservoir, a readout mechanism, and training. The input mechanism is responsible for allowing one to drive the reservoir dynamics based on a classical time series, often by preparing an input state conditioned on it. Logically, the input mechanism plays the role of preprocessing (see Fig.~\ref{fig:limit}). The reservoir realizes an input independent transformation of the input state. It often equips the resulting state with a richer dependency on the input through, e.g., creating correlations. The readout mechanism is responsible for extracting the chosen reservoir observables via suitable (ensemble) measurements. Together, the reservoir and the readout mechanism play the role of the reservoir transformation. We mention in passing that keeping the reservoir fixed but changing the readout mechanism may lead to an even substantially different reservoir transformation \cite{hahto2025smarter}.

Time series processing requires memory of recent input history, which is conventionally provided by the state of the reservoir. This makes QRC somewhat similar to recurrent quantum neural networks. In the limiting case where there is memory only of the most recent input $s_t$, the proper term is quantum extreme learning machine, which is akin to feedforward quantum neural networks. Systems suitable for reservoir computing have the fading memory property, which means that under drive their observables become fading memory functions which then span a subspace in the full function space.  When these functions are (non)linear in input we speak of (non)linear memory. 

Training is achieved via classical post-processing of recorded values of computational nodes. Since this kind of training  involves only classical data, it provides QRC its characteristic engineering freedom and rapid training. Typically one forms the optimal linear combination of the nodes, which amounts to an orthogonal projection of $F$ into the subspace, minimizing the mean squared error
\begin{equation}
\mathrm{NMSE(\bar{\mathbf{o}},\mathbf{o})=\sum_k\frac{(\bar{\mathbf{o}}-\mathbf{o})^2}{\bar{\mathbf{o}}^2}}
\end{equation}
between the target output $\bar{\mathbf{o}}$ and actual output $\mathbf{o}$. Linear combinations of observables are computationally very cheap and consequently offload the bulk of the processing to the reservoir, harnessing its dynamics for nontrivial transformations.

In practice, for a given input and target one divides $\mathbf{s}$ into preparation, training and test phases. The first phase will erase information about the initial state of the reservoir. In the training phase, the observables at each time step are used as rows of a matrix $\mathbf{X}$ and a column vector of all ones is typically added to account for a constant bias term. Weights of some linear combination of reservoir observables can now be given as a column vector $\mathbf{W}$ such that $\mathbf{X}\mathbf{W}=\mathbf{o}$ where $\mathbf{o}$ is the actual output that should approximate a target output $\bar{\mathbf{o}}$. The weights that minimize NMSE can be found by setting \begin{equation}    \mathbf{W}=\mathbf{X}^+\bar{\mathbf{o}}^\top
\end{equation}
where $\mathbf{X}^+$ is the Moore-Penrose pseudoinverse \cite{jaeger2002tutorial,montavon2012neural}. These weights are then  used to assess the NMSE when the reservoir is exposed to previously unseen cases in the testing phase. It should be noted that for any finite length of training phase this only approximates an orthogonal projection due to finite data effects. Regularization could be considered to prevent overfitting, however in QRC statistical noise from a finite ensemble size leads to additive noise in reservoir observables, which in turn can play the role of regularization \cite{nokkala2021high}.

\section{Proof of Theorem~\ref{thm:linear_bottleneck}}
\label{app:proof_linear}

\subsection{Affine reservoir updates}
\label{app:affine}

In the most general case, the linear reservoir transformation may be affine rather than strictly linear,
\begin{equation}
    x_t=\mathbf{A}x_{t-1}+\mathbf{B}g_t+\mathbf{c},
    \label{eq:affine_update}
\end{equation}
where $\mathbf{c}$ is a constant vector. This causes no essential change because the output already includes a freely tunable bias term.

Indeed, define the augmented state and input vectors
\begin{equation}
    \tilde{x}_t=
    \begin{pmatrix}
        x_t\\
        1
    \end{pmatrix},
    \qquad
    \tilde{g}_t=
    \begin{pmatrix}
        g_t\\
        1
    \end{pmatrix},
\end{equation}
and the augmented matrices
\begin{equation}
    \tilde{\mathbf{A}}=
    \begin{pmatrix}
        \mathbf{A} & \mathbf{c}\\
        \mathbf{0}^\top & 1
    \end{pmatrix},
    \qquad
    \tilde{\mathbf{B}}=
    \begin{pmatrix}
        \mathbf{B} & \mathbf{0}\\
        \mathbf{0}^\top & 0
    \end{pmatrix}.
\end{equation}
Then Eq.~\eqref{eq:affine_update} is equivalent to the purely linear update
\begin{equation}
    \tilde{x}_t=\tilde{\mathbf{A}}\tilde{x}_{t-1}+\tilde{\mathbf{B}}\tilde{g}_t.
\end{equation}
Similarly, the readout
\begin{equation}
    o_t=\mathbf{w}^\top x_t+w_0
\end{equation}
can be written as
\begin{equation}
    o_t=\tilde{\mathbf{w}}^\top \tilde{x}_t,
    \qquad
    \tilde{\mathbf{w}}=
    \begin{pmatrix}
        \mathbf{w}\\
        w_0
    \end{pmatrix}.
\end{equation}
Therefore it is sufficient to prove Theorem~\ref{thm:linear_bottleneck} for strictly linear updates. In the remainder of the appendices we omit the tildes for simplicity.

\subsection{Case A: preprocessing with memory, memoryless reservoir}

In this case $\mathbf{A}=\mathbf{0}$ and $g_t$ is allowed to depend on input history $\{s_{t-\tau}\}_{\tau\geq0}$, rather than only on the most recent input $s_t$. The only requirement is fading memory.

The information-processing capacity (IPC) is upper bounded by the number of available linearly independent functions of the input, and in the asymptotic setting of Ref.~\cite{dambre2012information} this bound is saturated for systems with fading memory. The number of linearly independent functions coincides with the rank of the data matrix. Let $\mathbf{X}^{\mathrm{in}}$ be the matrix whose $k$th row is $g_k^\top$, and let $\mathbf{X}$ denote the corresponding matrix after the reservoir transformation. We omit the constant bias column, since it contributes only to the degree-zero direction.

Using Eq.~\eqref{eq:linear_reservoir_update} with $\mathbf{A}=\mathbf{0}$, we obtain
\begin{equation}
    \mathbf{X}=\mathbf{X}^{\mathrm{in}}\mathbf{B}^\top.
\end{equation}
Therefore,
\begin{equation}
    \mathrm{IPC}(\mathbf{X})
    \leq \mathrm{rank}(\mathbf{X})
    =\mathrm{rank}(\mathbf{X}^{\mathrm{in}}\mathbf{B}^\top)
    \leq \min\!\left(\mathrm{rank}(\mathbf{X}^{\mathrm{in}}),\mathrm{rank}(\mathbf{B}^\top)\right)
    \leq \mathrm{rank}(\mathbf{X}^{\mathrm{in}}).
\end{equation}
Hence the total IPC after the reservoir is upper bounded by the IPC already present in preprocessing alone. This proves part (A) of Theorem~\ref{thm:linear_bottleneck}.

\subsection{Auxiliary lemma on fixed delays}

We next formalize the fact that functions supported on different delays do not contribute to the reconstruction of a target at a fixed delay.

\begin{lemma}
\label{lem:fixed_delay}
Let $\tau'$ be fixed and let $z_t=P_d(s_{t-\tau'})$ with $d>0$. Let $\mathcal{U}_{\tau'}$ be the span of the constant function together with all functions of the single variable $s_{t-\tau'}$, and let $\mathcal{V}_{\tau'}$ be the span of all nonconstant functions depending only on delays different from $\tau'$. Then
\begin{equation}
    \mathcal{U}_{\tau'} \perp \mathcal{V}_{\tau'}
\end{equation}
in the IPC inner-product space. In particular, the capacity for reconstructing $z_t$ from any feature space contained in $\mathcal{U}_{\tau'}\oplus\mathcal{V}_{\tau'}$ is equal to the capacity obtained from its projection onto $\mathcal{U}_{\tau'}$ alone.
\end{lemma}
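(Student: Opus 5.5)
The plan is in three steps. First I will establish the orthogonality $\mathcal{U}_{\tau'}\perp\mathcal{V}_{\tau'}$ from the iid structure of the input. Then I will write the optimal readout as an orthogonal projection. Finally I will show that only the $\mathcal{U}_{\tau'}$-component of the reconstruction enters the capacity, which is how I read the equality in the statement.

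\emph{Orthogonality.} The IPC inner product is $\langle f,h\rangle=\mathbb{E}[f h]$ over the iid input process. Take $u\in\mathcal{U}_{\tau'}$ and a nonconstant $v\in\mathcal{V}_{\tau'}$. Since $\mathcal{V}_{\tau'}$ is spanned by functions of inputs at delays $\neq\tau'$, I take these generators to be orthogonalised against constants, i.e. with the constant component removed, which is exactly what ``nonconstant'' means in the Legendre basis, where every $P_d^\tau$ with $d>0$ has zero mean. Independence of $s_{t-\tau'}$ from $\{s_{t-\tau}\}_{\tau\neq\tau'}$ then gives $\mathbb{E}[u v]=\mathbb{E}[u]\,\mathbb{E}[v]=0$. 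Bilinearity and closure extend this to the full spans. I expect this step to be routine.

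\emph{Capacity as a projection.} Let $\mathcal{F}\subseteq\mathcal{U}_{\tau'}\oplus\mathcal{V}_{\tau'}$ be the span of the features together with the bias. Following Appendix~\ref{app:training}, the optimal readout produces $\hat z=\Pi_{\mathcal{F}}z$. Then $\mathrm{NMSE}=\|z-\hat z\|^2/\|z\|^2$, so $C_z(\mathcal{F})=\|\hat z\|^2/\|z\|^2=\langle z,\hat z\rangle/\|z\|^2$, where the last equality uses $z-\hat z\perp\hat z$.

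\emph{Only the $\mathcal{U}_{\tau'}$-part contributes.} Decompose $\hat z=\Pi_{\mathcal{U}}\hat z+\Pi_{\mathcal{V}}\hat z$ using the orthogonal splitting above. Because $z=P_d(s_{t-\tau'})\in\mathcal{U}_{\tau'}$, orthogonality gives $\langle z,\Pi_{\mathcal{V}}\hat z\rangle=0$. Hence
\begin{equation}
C_z(\mathcal{F})=\frac{\langle z,\Pi_{\mathcal{U}}\hat z\rangle}{\|z\|^2}.
\end{equation}
This is the precise sense of the lemma: the capacity equals the quantity computed from the projection onto $\mathcal{U}_{\tau'}$ of the reconstruction alone, and every component at other delays drops out exactly.

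\emph{The feature-space version, and the main obstacle.} The main obstacle is the stronger reading, that $C_z(\mathcal{F})$ equals the capacity $C_z(\Pi_{\mathcal{U}}\mathcal{F})$ of the projected feature space. I would handle it in two steps.

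First, the inequality always holds. For any $f=u+v\in\mathcal{F}$ we have $\langle z,f\rangle=\langle z,u\rangle$ and $\|f\|^2=\|u\|^2+\|v\|^2\ge\|u\|^2$. Since $C_z(\mathcal{F})=\sup_{f\in\mathcal{F}}\langle z,f\rangle^2/(\|z\|^2\|f\|^2)$, it follows that $C_z(\mathcal{F})\le C_z(\Pi_{\mathcal{U}}\mathcal{F})$.

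Second, I would identify the hypothesis under which this becomes an equality: $\mathcal{F}=(\mathcal{F}\cap\mathcal{U}_{\tau'})\oplus(\mathcal{F}\cap\mathcal{V}_{\tau'})$, equivalently $\Pi_{\mathcal{U}}\mathcal{F}\subseteq\mathcal{F}$. In that case $\Pi_{\mathcal{F}}z$ lies in $\mathcal{F}\cap\mathcal{U}_{\tau'}=\Pi_{\mathcal{U}}\mathcal{F}$, so $\hat z=\Pi_{\Pi_{\mathcal{U}}\mathcal{F}}z$ and the two capacities coincide. Without this splitting, the one-feature example $\mathcal{F}=\mathrm{span}\{z+v\}$ with $v\in\mathcal{V}_{\tau'}$ nonzero gives a strict inequality. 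So the equality must be read in the reconstruction sense displayed above, or with the splitting hypothesis, and I would state this explicitly.

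In either reading the consequence needed for Theorem~\ref{thm:linear_bottleneck}(B) follows. Whatever the reservoir does, the delay-$\tau'$ capacity is controlled by what lies in $\mathcal{U}_{\tau'}$, namely functions of $s_{t-\tau'}$ that the memoryless preprocessing $g(s_{t-\tau'})$ already supplies.
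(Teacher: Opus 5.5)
Your proposal is correct. On the capacity half it takes a different and more careful route than the paper.

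\textbf{Orthogonality.} You and the paper use the same mechanism: the expectation factorises over independent delays. The paper writes $\langle z_t,v_t\rangle=\mathbb{E}[z_t]\,\mathbb{E}[v_t]=0$ using the zero mean of $z_t$. That gives only $z_t\perp\mathcal{V}_{\tau'}$, and the paper then appeals to the constant direction being ``absorbed'' in $\mathcal{U}_{\tau'}$. You instead use $\mathbb{E}[v]=0$, which is what full orthogonality $\mathcal{U}_{\tau'}\perp\mathcal{V}_{\tau'}$ actually requires. Your zero-mean reading of $\mathcal{V}_{\tau'}$ is genuinely needed: taken literally, the span of nonconstant functions contains the constants, e.g.\ $s_{t-\tau}^2-(s_{t-\tau}^2-1)=1$.

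\textbf{Capacity claim.} The paper only observes that the projection of $z_t$ onto $\mathcal{U}_{\tau'}\oplus\mathcal{V}_{\tau'}$ equals its projection onto $\mathcal{U}_{\tau'}$. Since $z_t\in\mathcal{U}_{\tau'}$, both are just $z_t$, so this says nothing about a proper feature subspace $\mathcal{F}$.

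Your argument fills this in. From $\langle z,u+v\rangle=\langle z,u\rangle$ and $\|u+v\|\ge\|u\|$ you get $C_z(\mathcal{F})\le C_z(\Pi_{\mathcal{U}}\mathcal{F})$. Your one-feature example built from $z+v$ correctly shows that equality fails unless $\mathcal{F}$ splits.

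This is exactly the situation in case (B), where every observable mixes all delays. For instance, a scalar $x_t=\sum_{\tau}a^\tau s_{t-\tau}$ has capacity $a^{2\tau'}(1-a^2)<1$ for $z_t\propto s_{t-\tau'}$, while the matching-delay term alone has capacity $1$.

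\textbf{What each route gives.} The inequality you prove is precisely what Eq.~\eqref{eq:delay_bound} uses. Your route therefore proves the lemma in the form the theorem actually needs, and it corrects the lemma's ``equal'' to ``at most'', with equality under your splitting hypothesis. The paper's route is shorter, but it only literally covers the full space $\mathcal{U}_{\tau'}\oplus\mathcal{V}_{\tau'}$, where the claim is trivial.
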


\begin{proof}
The IPC inner product is the expectation value of the product of two fading-memory functions. Since the inputs are independently distributed across delays, the expectation of a product factorizes over distinct delays. For any nonconstant function $v_t\in\mathcal{V}_{\tau'}$ one has
\begin{equation}
    \langle z_t,v_t\rangle
    =\mathbb{E}[z_t v_t]
    =\mathbb{E}[z_t]\mathbb{E}[v_t]
    =0,
\end{equation}
because $z_t=P_d(s_{t-\tau'})$ with $d>0$ has zero mean. The constant function is included in $\mathcal{U}_{\tau'}$, so the only possible non-orthogonal constant direction is already absorbed there. Therefore $\mathcal{U}_{\tau'}$ and $\mathcal{V}_{\tau'}$ are orthogonal. Since $z_t\in\mathcal{U}_{\tau'}$, the orthogonal projection of $z_t$ onto $\mathcal{U}_{\tau'}\oplus\mathcal{V}_{\tau'}$ coincides with its projection onto $\mathcal{U}_{\tau'}$ alone, which proves the claim.
\end{proof}

\subsection{Case B: memoryless preprocessing, reservoir with memory}

In this case $g_t=g(s_t)$ depends only on the most recent input, whereas the reservoir recurrence carries memory. The solution to Eq.~\eqref{eq:linear_reservoir_update} is
\begin{equation}
    x_t=\mathbf{A}^t x_0+\sum_{\tau=0}^{t-1}\mathbf{A}^\tau\mathbf{B}g(s_{t-\tau}),
    \label{eq:unrolled_linear}
\end{equation}
where $x_0$ is an arbitrary initial state. By fading memory,
\begin{equation}
    \lim_{t\rightarrow\infty}\mathbf{A}^t x_0=0.
\end{equation}
Hence in the asymptotic regime the observables are sums of terms depending on one delay at a time.

Fix a delay $\tau'$ and consider the task of reconstructing
\begin{equation}
    z_t=P_d(s_{t-\tau'}).
\end{equation}
By Eq.~\eqref{eq:unrolled_linear}, the feature space generated by $x_t$ is contained in the sum of the space generated by the term
\begin{equation}
    \mathbf{A}^{\tau'}\mathbf{B}g(s_{t-\tau'})
\end{equation}
and a second space containing all contributions from delays different from $\tau'$. By Lemma~\ref{lem:fixed_delay}, only the matching-delay term can contribute to the reconstruction of $z_t$. Therefore
\begin{equation}
   C\!\left(x_t,P_d(s_{t-\tau'})\right)
   \leq
   C\!\left(\mathbf{A}^{\tau'}\mathbf{B}g(s_{t-\tau'}),P_d(s_{t-\tau'})\right).
   \label{eq:delay_bound}
\end{equation}

Since this holds for every degree $d$, the delay-resolved IPC at fixed delay $\tau'$ is upper bounded by the IPC obtained from the correct term alone. Finally, by part (A) already proved above, the linear map $\mathbf{A}^{\tau'}\mathbf{B}$ cannot increase the available IPC, so
\begin{equation}
    \mathrm{IPC}^{(\tau')}(x_t)\leq \mathrm{IPC}^{(\tau')}(g(s_{t-\tau'})).
\end{equation}
This proves part (B) of Theorem~\ref{thm:linear_bottleneck}.

\section{Gaussian continuous-variable reservoirs as an application}
\label{app:gaussian_example}

Gaussian QRC often uses first moments or covariance-matrix elements as computational nodes. We focus here on the covariance-matrix formulation; the case of means is analogous.

Let $\sigma_t$ be an input covariance matrix determined by the input at timestep $t$, and let $\sigma_t^{\mathrm{out}}$ denote the output covariance matrix whose elements are ultimately used for the readout. The input state itself need not be Gaussian. The Gaussian reservoir transformation is determined by a fixed symplectic matrix $\mathbf{S}$.

\subsection{Case A}

In the memoryless-reservoir case,
\begin{equation}
    \sigma_t^{\mathrm{out}}=\mathbf{S}\sigma_t\mathbf{S}^\top.
\end{equation}
After vectorization this becomes
\begin{equation}
    \mathrm{vec}(\sigma_t^{\mathrm{out}})
    =
    (\mathbf{S}\otimes\mathbf{S})\,\mathrm{vec}(\sigma_t),
\end{equation}
where $\otimes$ is the Kronecker product. Thus Eq.~\eqref{eq:linear_reservoir_update} applies with
\begin{equation}
    x_t=\mathrm{vec}(\sigma_t^{\mathrm{out}}),\qquad
    g_t=\mathrm{vec}(\sigma_t),\qquad
    \mathbf{A}=\mathbf{0},\qquad
    \mathbf{B}=\mathbf{S}\otimes\mathbf{S}.
\end{equation}

\subsection{Case B}

In the memoryful-reservoir case, write
\begin{equation}
    \sigma_t^{\mathrm{out}}=\mathbf{S}\left(\sigma^R_{t-1}\oplus\sigma_t\right)\mathbf{S}^\top,
\end{equation}
where $\sigma^R_{t-1}$ is the reservoir covariance matrix and $\oplus$ denotes the direct sum. The reservoir modes evolve according to
\begin{equation}
    \sigma_t^R=\mathbf{P}_R \sigma_t^{\mathrm{out}}\mathbf{P}_R^\top,
\end{equation}
where $\mathbf{P}_R$ projects onto the reservoir modes.

To avoid confusion with the abstract matrices $\mathbf{A}$ and $\mathbf{B}$, decompose the symplectic matrix as
\begin{equation}
    \mathbf{S}=
    \begin{pmatrix}
        \mathbf{M} & \mathbf{N}\\
        \mathbf{P} & \mathbf{Q}
    \end{pmatrix},
\end{equation}
where $\mathbf{M}$ has the dimensions of $\sigma^R_{t-1}$ and $\mathbf{Q}$ those of $\sigma_t$. Then the reservoir covariance update is
\begin{equation}
    \sigma_t^R=\mathbf{M}\sigma_{t-1}^R\mathbf{M}^\top+\mathbf{N}\sigma_t\mathbf{N}^\top.
\end{equation}
After vectorization,
\begin{equation}
    \mathrm{vec}(\sigma_t^R)
    =
    (\mathbf{M}\otimes\mathbf{M})\,\mathrm{vec}(\sigma_{t-1}^R)
    +
    (\mathbf{N}\otimes\mathbf{N})\,\mathrm{vec}(\sigma_t).
\end{equation}
Hence Eq.~\eqref{eq:linear_reservoir_update} applies with
\begin{equation}
    x_t=\mathrm{vec}(\sigma_t^R),\qquad
    g_t=\mathrm{vec}(\sigma_t),\qquad
    \mathbf{A}=\mathbf{M}\otimes\mathbf{M},\qquad
    \mathbf{B}=\mathbf{N}\otimes\mathbf{N}.
\end{equation}

If the output is formed not only from reservoir modes but from a larger subset of modes, or even from the entire $\sigma_t^{\mathrm{out}}$, the same bound still applies. The reason is that the full output covariance matrix is obtained from the block-diagonal matrix $\sigma^R_{t-1}\oplus\sigma_t$ by the invertible linear transformation induced by $\mathbf{S}$, and invertible linear transformations preserve the span of functions. Since IPC is determined by that span, the bound is unchanged.

Finally, Gaussian measurements can be reduced to Gaussian unitaries, auxiliary Gaussian modes, and homodyne detection. These additional Gaussian transformations can therefore be absorbed into the same symplectic description without affecting the bound. Dissipation and decoherence can only decrease accessible IPC, and therefore do not invalidate the bound.

\section{Evaluation of non-Gaussian observables}
\label{app:wick}

In this appendix we describe how the observables of the non-Gaussian reservoirs are evaluated exactly, without truncating the Hilbert space. The key point is that the states used in the numerics are obtained by applying a finite number of creation and annihilation operators to an underlying Gaussian state. This covers both conditional single-photon subtraction and addition on equal footing, and more generally provides the elementary building blocks that appear in perturbative expansions of weak non-Gaussian processing.

Indeed, if a weak non-Gaussian potential is generated by a unitary of the form
\begin{equation}
\hat U_{\mathrm{nG}}=\exp[-i\epsilon V(\hat x,\hat p)],
\qquad \epsilon\ll 1,
\end{equation}
where $V$ is a polynomial in the quadratures, then~\cite{polyapprox}
\begin{equation}
\hat U_{\mathrm{nG}} = \mathbb{I} - i\epsilon V(\hat x,\hat p)+O(\epsilon^2).
\end{equation}
Since $\hat x_j$ and $\hat p_j$ are linear combinations of $\hat a_j$ and $\hat a_j^\dagger$, the correction term is a finite sum of monomials in creation and annihilation operators. The formalism developed below therefore applies directly to the experimentally relevant photon-added and photon-subtracted states considered in this work, and more broadly to the basic terms appearing in weak non-Gaussian expansions.

\subsection{Non-Gaussian states built from Gaussian ones}

Let $\hat\rho_G$ be an $N$-mode Gaussian state and let
\begin{equation}
\hat O = \prod_{r=1}^{n} \hat a_{s_r}^{\#_r},
\qquad
\#_r\in\{\cdot,\dagger\},
\end{equation}
be a finite ordered product of annihilation and creation operators acting on selected modes $s_r$. The corresponding non-Gaussian state is
\begin{equation}
\hat\rho_{\mathrm{nG}}
=
\frac{\hat O \hat\rho_G \hat O^\dagger}
{\Tr\!\left[\hat O \hat\rho_G \hat O^\dagger\right]}
=
\frac{\hat O \hat\rho_G \hat O^\dagger}{K},
\label{eq:rho_nG_appendix}
\end{equation}
with normalization
\begin{equation}
K=\Tr\!\left[\hat O \hat\rho_G \hat O^\dagger\right]
=\Tr\!\left[\hat O^\dagger \hat O \hat\rho_G\right].
\end{equation}
The expectation value of an observable $\hat M$ in the de-Gaussified state is therefore
\begin{equation}
\langle \hat M\rangle_{\mathrm{nG}}
=
\frac{\Tr\!\left[\hat O^\dagger \hat M \hat O\, \hat\rho_G\right]}{K}.
\label{eq:nG_expectation_appendix}
\end{equation}

In the simulations, $\hat M$ is always a polynomial of degree at most two in the quadrature operators, since the readout is ultimately based on first and second moments. Equation~\eqref{eq:nG_expectation_appendix} therefore reduces the problem to evaluating Gaussian expectation values of finite products of ladder operators.

The most general Gaussian state may have nonzero first moments. In that case one may write
\begin{equation}
\hat a_j = \Delta \hat a_j + \alpha_j,
\qquad
\alpha_j = \langle \hat a_j\rangle_G,
\end{equation}
and similarly for $\hat a_j^\dagger$, where the centered operators $\Delta\hat a_j$ have vanishing expectation value in $\hat\rho_G$. The evaluation of Eq.~\eqref{eq:nG_expectation_appendix} can then be reduced to the zero-displacement case by expanding all products in centered operators and c-number displacements.

In the present work, the encoded Gaussian states have zero displacement, so the explicit formulas below are given in that setting. This is the only case used in the numerics.

\subsection{Wick expansion}

For a zero-mean Gaussian state, expectation values of odd numbers of centered field operators vanish, while even moments factorize according to Wick's theorem. Let $\hat b_1,\ldots,\hat b_m$ be operators each equal to some $\hat a_j$ or $\hat a_j^\dagger$. Then
\begin{equation}
\Tr[\hat b_1\cdots \hat b_{2q+1}\hat\rho_G]=0,
\end{equation}
and
\begin{equation}
\Tr[\hat b_1\cdots \hat b_{2q}\hat\rho_G]
=
\sum_{\mathcal P}
\prod_{(u,v)\in\mathcal P}
\Tr[\hat b_u \hat b_v \hat\rho_G],
\label{eq:wick_appendix}
\end{equation}
where the sum runs over all perfect matchings $\mathcal P$ of the set $\{1,\ldots,2q\}$. Hence every required expectation value reduces to a sum over products of pair contractions.

The computational cost is determined by the number of such matchings, which grows as $(2q-1)!!$. In the regime considered here this remains tractable because only a small number of non-Gaussian operations is applied.

The only building blocks needed in Eq.~\eqref{eq:wick_appendix} are the pair contractions
\begin{equation}
\langle \hat a_j \hat a_k\rangle_G,\quad
\langle \hat a_j^\dagger \hat a_k^\dagger\rangle_G,\quad
\langle \hat a_j^\dagger \hat a_k\rangle_G,\quad
\langle \hat a_j \hat a_k^\dagger\rangle_G.
\end{equation}
To express them in terms of the Gaussian covariance matrix, it is convenient to define the quadrature blocks
\begin{equation}
\sigma^{xx}_{jk} := \sigma_{2j-1,\,2k-1},\qquad
\sigma^{xp}_{jk} := \sigma_{2j-1,\,2k},
\end{equation}
\begin{equation}
\sigma^{px}_{jk} := \sigma_{2j,\,2k-1},\qquad
\sigma^{pp}_{jk} := \sigma_{2j,\,2k},
\end{equation}
using the quadrature ordering introduced in the main text,
\begin{equation}
\hat{\bm{\xi}}=(\hat x_1,\hat p_1,\hat x_2,\hat p_2,\ldots)^\top.
\end{equation}
With
\begin{equation}
\hat a_j = \frac{\hat x_j+i\hat p_j}{2},
\qquad
\hat a_j^\dagger = \frac{\hat x_j-i\hat p_j}{2},
\end{equation}
and the commutation relation $[\hat x_j,\hat p_k]=2i\delta_{jk}$, one obtains
\begin{equation}
\langle \hat a_j \hat a_k\rangle_G
=
\frac{1}{4}
\left(
\sigma^{xx}_{jk}
-\sigma^{pp}_{jk}
+i\bigl(\sigma^{xp}_{jk}+\sigma^{px}_{jk}\bigr)
\right),
\label{eq:Iaa_appendix}
\end{equation}
\begin{equation}
\langle \hat a_j^\dagger \hat a_k^\dagger\rangle_G
=
\frac{1}{4}
\left(
\sigma^{xx}_{jk}
-\sigma^{pp}_{jk}
-i\bigl(\sigma^{xp}_{jk}+\sigma^{px}_{jk}\bigr)
\right),
\label{eq:Iadagadag_appendix}
\end{equation}
\begin{equation}
\langle \hat a_j^\dagger \hat a_k\rangle_G
=
\frac{1}{4}
\left(
\sigma^{xx}_{jk}
+\sigma^{pp}_{jk}
+i\bigl(\sigma^{xp}_{jk}-\sigma^{px}_{jk}\bigr)
-2\delta_{jk}
\right),
\label{eq:Iadag_a_appendix}
\end{equation}
\begin{equation}
\langle \hat a_j \hat a_k^\dagger\rangle_G
=
\frac{1}{4}
\left(
\sigma^{xx}_{jk}
+\sigma^{pp}_{jk}
-i\bigl(\sigma^{xp}_{jk}-\sigma^{px}_{jk}\bigr)
+2\delta_{jk}
\right).
\label{eq:Ia_adag_appendix}
\end{equation}
Equations~\eqref{eq:Iaa_appendix}--\eqref{eq:Ia_adag_appendix} fully determine all Gaussian moments needed in the Wick expansion.

\subsection{From ladder-operator moments to the covariance matrix}

The observables used in the simulations are entries of the covariance matrix of selected quadratures in the non-Gaussian state. These are reconstructed from Eq.~\eqref{eq:nG_expectation_appendix} by choosing $\hat M$ to be linear or quadratic in the ladder operators and then converting to quadrature moments.

The first moments are
\begin{equation}
\langle \hat x_j\rangle_{\mathrm{nG}}
=
\langle \hat a_j\rangle_{\mathrm{nG}}
+
\langle \hat a_j^\dagger\rangle_{\mathrm{nG}},
\end{equation}
\begin{equation}
\langle \hat p_j\rangle_{\mathrm{nG}}
=
\frac{\langle \hat a_j\rangle_{\mathrm{nG}}
-
\langle \hat a_j^\dagger\rangle_{\mathrm{nG}}}{i}.
\end{equation}
The second moments follow from
\begin{equation}
\langle \hat x_j \hat x_k\rangle_{\mathrm{nG}}
=
\langle \hat a_j \hat a_k\rangle_{\mathrm{nG}}
+
\langle \hat a_j \hat a_k^\dagger\rangle_{\mathrm{nG}}
+
\langle \hat a_j^\dagger \hat a_k\rangle_{\mathrm{nG}}
+
\langle \hat a_j^\dagger \hat a_k^\dagger\rangle_{\mathrm{nG}},
\label{eq:xx_from_a_appendix}
\end{equation}
\begin{equation}
\langle \hat p_j \hat p_k\rangle_{\mathrm{nG}}
=
-\langle \hat a_j \hat a_k\rangle_{\mathrm{nG}}
+\langle \hat a_j \hat a_k^\dagger\rangle_{\mathrm{nG}}
+\langle \hat a_j^\dagger \hat a_k\rangle_{\mathrm{nG}}
-\langle \hat a_j^\dagger \hat a_k^\dagger\rangle_{\mathrm{nG}},
\label{eq:pp_from_a_appendix}
\end{equation}
\begin{equation}
\langle \hat x_j \hat p_k\rangle_{\mathrm{nG}}
=
\frac{
\langle \hat a_j \hat a_k\rangle_{\mathrm{nG}}
-\langle \hat a_j \hat a_k^\dagger\rangle_{\mathrm{nG}}
+\langle \hat a_j^\dagger \hat a_k\rangle_{\mathrm{nG}}
-\langle \hat a_j^\dagger \hat a_k^\dagger\rangle_{\mathrm{nG}}
}{i}.
\label{eq:xp_from_a_appendix}
\end{equation}
The covariance matrix of the de-Gaussified state is then obtained from the usual symmetrized definition
\begin{equation}
\sigma^{\mathrm{nG}}_{mn}
=
\frac{1}{2}
\langle \hat \xi_m \hat \xi_n + \hat \xi_n \hat \xi_m\rangle_{\mathrm{nG}}
-
\langle \hat \xi_m\rangle_{\mathrm{nG}}
\langle \hat \xi_n\rangle_{\mathrm{nG}}.
\end{equation}

In the main text, the readout is formed from the covariance matrix of the measured $x$ quadratures. In that case only the first moments $\langle \hat x_j\rangle_{\mathrm{nG}}$ and the second moments \eqref{eq:xx_from_a_appendix} are needed. The more general formulas above are included for completeness and to make clear that the same formalism immediately extends to first moments, full covariance matrices, and alternative quadrature choices.

The formalism above is exact for any finite product of creation and annihilation operators acting on a Gaussian state. In particular, it treats photon subtraction and photon addition symmetrically. This is sufficient for all simulations reported in the main text, where only a small number of such operations is considered. More complicated non-Gaussian circuits built from alternating Gaussian transformations and multiple non-Gaussian layers can be reduced to the same ingredients, at the expense of longer operator strings and therefore a larger Wick expansion.

\section{Finite-resource model}
\label{app:finite_resources}

In practice, the observables used for the readout are not known exactly, but must be estimated from a finite ensemble of homodyne outcomes. In this appendix we describe how finite-resource effects are incorporated into the numerical analysis.

Let $\hat{\bm{\xi}}_{\!x}$ denote the vector of measured quadratures used in the readout. In the main text this is the vector of selected $x$ quadratures. For a given input, repeated measurements produce a sample
\begin{equation}
\bm{\xi}^{(1)},\bm{\xi}^{(2)},\ldots,\bm{\xi}^{(M)}\in\mathbb{R}^{m},
\end{equation}
where $M$ is the ensemble size and $m$ is the number of measured quadratures. The corresponding sample mean is
\begin{equation}
\bar{\bm{\xi}}=\frac{1}{M}\sum_{r=1}^{M}\bm{\xi}^{(r)},
\end{equation}
and the sample covariance matrix is
\begin{equation}
\hat{\Sigma}
=
\frac{1}{M-1}
\sum_{r=1}^{M}
\left(\bm{\xi}^{(r)}-\bar{\bm{\xi}}\right)
\left(\bm{\xi}^{(r)}-\bar{\bm{\xi}}\right)^\top.
\label{eq:sample_cov_appendix}
\end{equation}
The entries of $\hat{\Sigma}$ replace the exact covariance-matrix elements in the noisy simulations.

If the measured quadratures are jointly Gaussian with true covariance matrix $\Sigma$, then the sample covariance matrix has an exact Wishart distribution,
\begin{equation}
(M-1)\hat{\Sigma}\sim W_m(\Sigma,M-1),
\label{eq:wishart_appendix}
\end{equation}
where $W_m(\Sigma,\nu)$ denotes the $m$-dimensional Wishart distribution with scale matrix $\Sigma$ and $\nu$ degrees of freedom. This gives an exact finite-resource model for the Gaussian reservoirs studied in the main text.

Operationally, this means that for each ideal covariance matrix $\Sigma$ associated with a given input, a noisy estimate $\hat{\Sigma}$ can be generated directly from the Wishart law \eqref{eq:wishart_appendix}. The resulting noisy covariance-matrix entries are then used as computational nodes in exactly the same way as in the noiseless analysis.

For the non-Gaussian reservoirs considered here, the quadrature statistics are no longer exactly multivariate Gaussian, so the Wishart law is not exact. Nevertheless, for large ensemble size $M$, the empirical moments become sharply concentrated around their exact values. In particular, under standard finite-moment assumptions the sample mean satisfies a multivariate central limit theorem, and the fluctuations of the sample covariance matrix become asymptotically Gaussian around the true covariance matrix.

In the present work we model finite-resource effects for the non-Gaussian states by using the same Wishart law as in the Gaussian case, with the exact covariance matrix of the de-Gaussified state inserted as the scale matrix. This approximation is exact in the Gaussian limit and becomes increasingly accurate as $M$ grows. It also has the practical advantage of preserving positive semidefiniteness of the sampled covariance matrix by construction.

Thus, for both Gaussian and non-Gaussian reservoirs, finite-resource effects are incorporated by replacing the exact covariance matrix $\Sigma$ associated with a given input by a sampled matrix $\hat{\Sigma}$ drawn according to Eq.~\eqref{eq:wishart_appendix}, with the understanding that this is exact in the Gaussian case and an asymptotically accurate approximation in the weakly non-Gaussian case.

Once noisy covariance matrices have been generated, the corresponding computational nodes are assembled exactly as in the noiseless case. The information-processing capacities are then computed from the noisy observables using the same training and testing protocol as for the ideal data. In this way, the only difference between the noiseless and noisy simulations is that exact covariance-matrix elements are replaced by finite-sample estimates.

All noisy capacities reported in the main text are obtained by averaging over independent realizations of both the reservoir parameters and the finite-resource sampling procedure. This makes it possible to assess whether the excess above the Gaussian bound survives under experimentally realistic conditions.

The finite-resource model adopted here is intended as an operational approximation for the regime of interest, namely weakly non-Gaussian multimode states probed with large but finite ensembles. It is not a universal exact sampling law for arbitrary non-Gaussian states. In particular, for strongly non-Gaussian states or very small ensemble sizes, higher-order cumulants beyond the covariance matrix may influence the sampling statistics in ways not captured by the Wishart approximation. The model should therefore be interpreted as a controlled large-$M$ approximation that is exact for Gaussian states and appropriate for the weakly non-Gaussian regime studied in this work.

\end{document}